\setlist{noitemsep}  
\def\AS {ACPS}  
\def\II {\mathbb{I}}
\def\R {\mathds{R}}
\def\E {\mathbb{E}}
\newcommand*\abs[1]{\left|#1\right|}		
\newtheorem{theorem}{Theorem}
\newtheorem{definition}{Definition}
\theoremstyle{remark}
\newtheorem{rem}{Remark}
\theoremstyle{plain}
\newtheorem{exa}{Example}
\newcolumntype{C}[1]{>{\centering\arraybackslash}p{#1}}
\title{\vspace{-50pt} \textbf{Proper scoring rules for evaluating asymmetry in density forecasting}\thanks{\footnotesize The authors gratefully acknowledge Todd Clark, Michael McCracken, Massimiliano Marcellino, Barbara Rossi, Jonas Brehmer for their useful feedback. This paper is part of the research activities at the Centre for Applied Macroeconomics and Commodity Prices (CAMP) at the BI Norwegian Business School. This research used the SCSCF multiprocessor cluster system at Ca' Foscari University of Venice. Luca Rossini acknowledges financial support from the EU Horizon 2020 programme under the Marie Sk\l{}odowska-Curie scheme (grant agreement no. 796902). }}
\author{Matteo Iacopini\thanks{Ca' Foscari University of Venice, Italy. \color{blue}\texttt{matteo.iacopini@unive.it}} \and Francesco Ravazzolo\thanks{Free University of Bozen-Bolzano, Italy and CAMP, BI Norwegian Business School, Norway. \color{blue}\texttt{francesco.ravazzolo@unibz.it}} \and Luca Rossini\thanks{Queen Mary University of London, United Kingdom and Vrije Universiteit Amsterdam, The Netherlands. \color{blue}\texttt{l.rossini@qmul.ac.uk}}}
\begin{document}

\maketitle

\begin{abstract}
This paper proposes a novel asymmetric continuous probabilistic score (\AS) for evaluating and comparing density forecasts. It extends the proposed score  and defines a weighted version, which emphasizes regions of interest, such as the tails or the center of a variable's range. A test is also introduced to statistically compare the predictive ability of different forecasts. The \AS\, is of general use in any situation where the decision maker has asymmetric preferences in the evaluation of the forecasts. In an artificial experiment, the implications of varying the level of asymmetry in the \AS\, are illustrated. 
Then, the proposed score and test are applied to assess and compare density forecasts of  macroeconomic relevant datasets (US employment growth) and of commodity prices (oil and electricity prices) with particular focus on the recent COVID-19 crisis period.\\

\noindent \textbf{Keyword:} {asymmetric continuous probabilistic score};
{asymmetric loss};
{proper score};
{density forecast};
{predictive distribution};
{weighted score};
{probabilistic forecast}.
\end{abstract}

\section{Introduction}
Macroeconomic forecasting has always been of pivotal importance for central bankers, policymakers and researchers. Nowadays, the vast majority of the research in macroeconomics and finance mainly focuses on the development and implementation of forecasting techniques that minimize the expected squared forecast error (\cite{gneiting2011making}). This approach is grounded on the implicit assumption of using a symmetric loss function in evaluating the accuracy of a forecast.

Despite being common practice, the use of symmetric loss functions in forecasting is unrealistic especially in policy institutions, where the policymakers could have a specific aversion to positive or negative deviations of a forecast from the target.
Consider a policymaker who is interested in forecasting employment. Suppose that, if the predicted employment rate drops below a given threshold, she will be forced to adopt new expansionary economic policy. It is highly likely that the policymaker is more averse to forecasts that give too high probability mass to the right part of the distribution of the employment rate (positive growth of employment), while she may be more relaxed with respect to forecasts that give too high probability mass to the left part of the distribution (negative or low growth of employment). Other examples relate to energy markets that have recently experienced negative prices. WTI oil prices collapsed to -37.63 US dollar for barrel in April 2020; German electricity prices have measured several negative prices with the introduction of renewable energy resources (RES). Producers would be more sensitive to prices below a threshold, up to zero if the marginal cost of production is zero, as it is the case of RES, than higher prices.
These examples call for the design of a more general class of loss functions and scoring rules that account for asymmetry, in order to guide the process of making and assessing forecasts. To the best of our knowledge, a measure that properly incorporates asymmetry in density forecasting evaluation does not exist in the literature.

The main goal of this paper is the proposal of novel and practical forecasting evaluation tools that can fill in this gap and answer the increasing demand from policymakers and central bankers.
We plan to achieve this result by introducing an innovative asymmetric scoring rule that is able to measure and evaluate heterogeneous aversion to different deviations of a density forecast from the target. We derive some properties of the new scoring rule and in particular demonstrate that it is a proper scoring rule. Moreover, we provide threshold- and quantile-weighted versions that allow to emphasize the performance of the forecast in regions of interest to the policymaker.

Within the literature on point forecasting, \cite{Christoffersen1996,Christoffersen1997} proposed some asymmetric loss functions. In the former paper, they studied the optimal prediction problem under general loss structures and characterized the optimal predictor under an asymmetric loss function, focusing on the \textit{LinEx} and the \textit{LinLin} asymmetric functions. In the latter paper, the authors provided an illustration of an asymmetric loss in the context of GARCH processes.

More recently, scholars have begun to empirically investigate the degree of loss function asymmetry of central banks and other international institutions. Among others, \cite{Elliott2005} and \cite{Patton2007} proposed formal methods to infer the degree of asymmetry of the loss function and to test the rationality of forecasts.
Within this stream of literature, \cite{Artis2001} found that IMF and OECD forecasts of the deficit of G7 countries are biased towards over or under-prediction relative to mean square error (MSE) forecasts. Regarding European institutions forecasts, \cite{Christodoulakis2008,Christodoulakis2009} found evidence of asymmetric loss.
In another study, \cite{Dovern2017} documented that the GDP growth forecasts made by professional forecasters tend to exhibit systematic errors, and tend to overestimate GDP growth.
Moreover, \cite{Boero2008} interpreted the tendency to over-predict GDP growth as a signal that policymakers exhibit greater fear of under-prediction than over-prediction, thus suggesting that their judgements are based on an asymmetric loss.
More recently, \cite{Tsuchiya2016} examined the asymmetry of the loss functions of the Japanese government, the IMF and private forecasters for Japanese growth and inflation forecasts.

In the framework of forecast combination, \cite{elliott2004optimal} showed that the optimal combination weights significantly differ under asymmetric loss functions and skewed error distributions as compared to those obtained with mean squared error loss.
Finally, \cite{demetrescu2019predictive} studied factor-augmented forecasting under asymmetric point loss function.

%

An alternative and more universal approach to forecasting is the provision of a predictive density, known as probabilistic or density forecasting (see \citet[][ch.8]{elliott2016economic}).
Two key aspects of density forecasts are the statistical compatibility between the forecasts and the realized observations (calibration) and the concentration of predictive distributions (sharpness). The aim of probabilistic forecasts is to maximize their sharpness, subject to calibration (\cite{gneiting2013combining}).
Density forecasting is more complex than point forecasting since the estimation problem requires to construct the whole predictive distribution, rather than a specific function thereof (e.g., mean or quantile).
Several reasons have been suggested for preferring density over point forecasts (e.g., \cite{elliott2016forecasting}).
First, point forecasting is often associated to the mean of a distribution and it is optimal for highly restricted loss functions, such as quadratic loss function, but inadequate for any prospective user having a different loss.
Moreover, the value of a point forecast can be increased by supplementing it with some measures of uncertainty and complete probability distributions over the outcomes provide useful information for making economic decisions; see, for example, \cite{Anscombe1968} and \cite{Zarnowitz1969} for early works and the discussions in \cite{gp00}, \cite{Timmermann2006} and \cite{gneiting2011making}. \cite{CCM2020} extends the application to tail risk nowcasts of economic activity.
Finally, in recursive forecasting with nonlinear models the full predictive density matters since the nonlinear effects typically depend not only on the conditional mean, but also on where future values occur in the set of possible outcomes.

A natural way to evaluate and compare competing density forecasts is the use of proper scoring rules, which assess calibration and sharpness simultaneously and encourage honest and careful forecasting.
Despite the wide literature on the class of proper scoring rules for probabilistic forecasts of categorical and binary variables (e.g., see \cite{savage1971elicitation}, \cite{schervish1989general}) the advances for continuous variables are more limited. Motivated by these facts, we aim at designing a novel asymmetric proper scoring rule to be used for evaluating density forecasts of continuous variables, which is the typical case in macroeconomics and finance exercises (e.g., predicting variables such as unemployment, inflation, log-returns, GDP growth, and realized volatility).

\cite{gneiting2007strictly} proposed the continuous rank probability score (CRPS) as a proper scoring rule for probabilistic forecasts of continuous variables, and more recently, \cite{gneiting2011comparing} extended the CRPS by introducing a threshold- and a quantile-weighted version (tCRPS and qCRPS, respectively).
These scoring rules give more emphasis to the performance of the density forecast in a selected \textit{region of the domain}, $B$, by assigning more weight to the deviations from the observations made in $B$. The major drawback of both the CRPS and its weighted versions is the symmetry of the underlying reward scheme, meaning that they assign equal reward to positive and negative deviations of a probabilistic forecast from the target.
This comes from the fact that the CRPS is built on the Brier score and inherits some of its properties, such as properness and symmetry.
Similarly, since both the weighted versions of the CRPS essentially consist in re-weighting the CRPS over the domain of the variable of interest, they inherit the symmetry of the latter.

\cite{winkler1994evaluating} did a first effort towards asymmetric scoring rules and proposed a general method for constructing asymmetric proper scoring rules starting from symmetric ones. However, this approach is limited to forecasting binary variables, and continuous variables were not investigated.

We address this issue and contribute to the literature on proper scoring rules for evaluating density forecasts by proposing a novel asymmetric proper scoring rule which assigns different penalties to positive and negative deviations from the true density.
The main contribution of this paper is twofold.
First, we define a new proper scoring rule which assigns an asymmetric penalty to deviations from the target density. Moreover, we provide a threshold- and quantile-weighted version of it and develop an adaptation of the Diebold-Mariano test to statistically compare the predictive ability of different forecasts.
Then, we compare the performance of the scores with the CRPS and its weighted versions.
Second, we use the proposed score to evaluate density forecasts in three relevant applications in  macroeconomics (US employment growth) and commodity prices (oil and electricity prices) with data updated to the COVID-19 crisis period. Variables have experienced large volatilities, with sizeable spikes and negative energy prices. As we discussed above, players might be more sensitive to some specific parts of the distribution of these series and we shed light on how to evaluate this asymmetry.

The key result of this paper is the provision of a tool able to account for the decision maker's preferences in the evaluation of density forecasts, both in terms of domain- and error-weighting schemes.
Domain-weighting gives heterogeneous emphasis to the performance on different regions, while the error-weighting asymmetrically rewards negative and positive deviations from the target value.
The proposed weighted asymmetric scoring rule combines the two schemes and allows to evaluate the performance of the forecasting density from both perspectives.
The rest of the paper is organized as follows.
Section~\ref{sec:asymmetric_scoring_rule} presents a novel asymmetric scoring rule for density forecasts, its extension to threshold- and quantile-weighted versions and a test to compare the predictive accuracy of different forecasts. Then Section~\ref{sec:simulations} discusses its main properties. It also illustrates a comparison with the (weighted) CRPS in simulated experiments.
Finally, Section~\ref{sec:application} provides different applications on forecasting US macroeconomic variables (employment rate) and commodity prices (oil and electricity prices).
The article closes with a discussion in Section~\ref{sec:conclusions}.

The \textsc{MATLAB} code for implementing the proposed scoring rules is available at:
\begin{center}
\url{https://github.com/matteoiacopini/acps}
\end{center}

\section{Asymmetric Proper Scoring rules for Density forecasting}   \label{sec:asymmetric_scoring_rule}

The evaluation and comparison of probabilistic forecasts typically relies on proper scoring rules. Informally, a scoring rule is a measure that summarises the goodness of a probabilistic forecast by combining the predictive distribution and the value that actually materializes.
One can think of it as a measure of distance between the probabilistic forecast and the actual value. 
We consider positively oriented scoring rules, therefore if probabilistic forecast $P_1$ obtains a higher score than $P_2$, this means that $P_1$ yields a more accurate forecast than $P_2$. Therefore, the score can be interpreted as a reward to be maximized.


In more formal terms, following the notation of \cite{gneiting2007strictly}, consider the problem of making probabilistic forecasts on a general sample space $\Omega$. Let $\mathcal{A}$ be a $\sigma$-algebra of subsets of $\Omega$, and let $\mathcal{P}$ be a convex class of probability measures on $(\Omega,\mathcal{A})$. A \textit{probabilistic forecast} is any probability measure $P \in \mathcal{P}$, such that 
$P : \Omega \to \bar{\R}$, where $\bar{\R} = [-\infty,+\infty]$ denotes the extended real line, is said to be $\mathcal{P}$-quasi-integrable if it is measurable with respect to $\mathcal{A}$ and is quasi-integrable with respect to all $P \in \mathcal{P}$ (see \cite{bauer2011measure}).
A \textit{scoring rule} is any extended real-valued function $S : \mathcal{P} \times \Omega \to \bar{\R}$ such that $S(P,\cdot)$ is $\mathcal{P}$-quasi-integrable for all $P \in \mathcal{P}$. 
In practice, if $P$ is the forecast density and the event $\omega$ materializes, then the forecaster's reward is $S(P,\omega)$.

In order to be effectively used in scientific forecasts evaluation, scoring rules have to be proper, meaning that they have to reward accurate forecasts.
Suppose the true density of the observations is $Q$ and denote 
the expected value of $S(P,\omega)$ under $Q(\omega)$ with
\begin{equation*}
S(P,Q) = \E_Q[S(P,\omega)] = \int_{\Omega} S(P,\omega) \: Q(d\omega),
\end{equation*}
then the scoring rule $S$ is \textit{strictly proper} if $S(Q,Q) \geq S(P,Q)$. The equality holds if and only if $P = Q$, thus implying that the forecaster has higher reward if she predicts $P=Q$.
If instead $S(Q,Q) \geq S(P,Q)$ for all $P$ and $Q$, then the scoring rule is said to be \textit{proper}.

The vast majority of the proper scoring rules proposed in the literature are symmetric, that is, they reward in the same way positive and negative deviations from the target. For example, suppose a forecast $P_1$ assigns too high probability mass to the right part of the domain (as compared to the true density) and a forecast $P_2$ assigns too high probability mass to the left part, by the same amount. If these forecasts are evaluated under a symmetric scoring rule, then they receive the same score.

A symmetric loss is unsatisfactory for many real world situations where the decision maker has a preference or aversion towards a particular kind of error.
%
%
We aim at filling in this gap by defining a new asymmetric proper scoring rule for continuous variables, which is suited for evaluation and comparison of density forecasts and penalises more either side of the deviation from the target.


\begin{definition}[Asymmetric Continuous Probability Score]
Let $c \in (0,1)$ represent the level of asymmetry, such that $c=0.5$ implies a symmetric loss, while $c < 0.5$ penalises more the left tail, and  $c > 0.5$ the right tail.
Let $P$ be the probabilistic forecast and $y$ the realized (ex-post) value.
We define the asymmetric continuous probability score (\AS) as
\begin{equation}
\small
\begin{aligned}
\AS & (P,y;c) = \int_{-\infty}^y \big( c^2 - P(u)^2 \big) \Big[ \frac{1}{(1-c)^2} \II(P(u) > c) + \frac{1}{c^2} \II(P(u) \leq c) \Big] \: du \\
 & + \int_y^{+\infty} \big( (1-c)^2 - (1-P(u))^2 \big) \Big[ \frac{1}{(1-c)^2} \II(P(u) > c) + \frac{1}{c^2} \II(P(u) \leq c) \Big] \: du.
\end{aligned}
\label{eq:asymmetric_score}
\end{equation}
\end{definition}

The following result shows the properness of our new score for every level of asymmetry.

\begin{theorem}[Properness]     \label{thm:properness_APSR}
The asymmetric scoring rule \AS\, defined in eq.~\eqref{eq:asymmetric_score} is strictly proper for any $c \in (0,1)$.
\end{theorem}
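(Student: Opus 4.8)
The plan is to follow the standard CRPS-style route: reduce strict properness to a pointwise optimisation in the value of the forecast cdf. Write $w(p) = (1-c)^{-2}\II(p>c) + c^{-2}\II(p\le c)$ for the bracketed weight in \eqref{eq:asymmetric_score}, and combine the two integrals into one over $\R$:
\begin{equation*}
\AS(P,y;c) = \int_{\R}\Big[\II(u<y)\big(c^2-P(u)^2\big) + \II(u\ge y)\big((1-c)^2-(1-P(u))^2\big)\Big]w(P(u))\,du .
\end{equation*}
Let $Q$ be the true distribution and $Y\sim Q$. Since $\P_Q(Y>u)=1-Q(u)$ and $\P_Q(Y\le u)=Q(u)$ for almost every $u$, Fubini gives
\begin{equation*}
\E_Q\big[\AS(P,Y;c)\big] = \int_{\R} h\big(P(u);Q(u)\big)\,du, \qquad h(p;q):=\big[(1-q)(c^2-p^2)+q\big((1-c)^2-(1-p)^2\big)\big]w(p).
\end{equation*}
Hence it suffices to show that, for each fixed $q\in[0,1]$, the map $p\mapsto h(p;q)$ on $[0,1]$ is maximised uniquely at $p=q$; since $u\mapsto Q(u)$ is itself a cdf, the pointwise optimum is then realised by the admissible forecast $P=Q$.

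The heart of the argument is a single algebraic identity. Expanding the bracket and adding and subtracting $q^2$,
\begin{equation*}
(1-q)(c^2-p^2)+q\big((1-c)^2-(1-p)^2\big) = (c-q)^2-(p-q)^2 ,
\end{equation*}
so $h(p;q)=w(p)\big[(c-q)^2-(p-q)^2\big]$. Because $w$ is piecewise constant with a jump at $p=c$, the maximisation is a short case split rather than a calculus exercise. If $q\le c$: on $\{p\le c\}$ one has $h(p;q)=c^{-2}\big[(c-q)^2-(p-q)^2\big]\le c^{-2}(c-q)^2=h(q;q)$, with equality iff $p=q$; on $\{p>c\}$ one has $p-q>c-q\ge 0$, hence $(p-q)^2>(c-q)^2$, the bracket is negative, and $h(p;q)<0\le h(q;q)$. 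The case $q>c$ is symmetric (exchange the roles of $c^{-2}$ and $(1-c)^{-2}$ and of the two half-intervals), and $q=c$ gives $h(p;c)=-w(p)(p-c)^2\le 0=h(c;c)$. In every case the maximum over $[0,1]$ is attained only at $p=q$.

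Integrating the pointwise inequality $h(P(u);Q(u))\le h(Q(u);Q(u))$ in $u$ yields $\E_Q[\AS(P,Y;c)]\le\E_Q[\AS(Q,Y;c)]$ for all $P\in\mathcal{P}$, which is properness. For strictness, equality of the integrals forces $P(u)=Q(u)$ for Lebesgue-almost every $u$; since $P$ and $Q$ are non-decreasing and right-continuous, a.e.\ equality upgrades to $P\equiv Q$. This gives strict properness for every $c\in(0,1)$.

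Two points will need care. First, the discontinuity of $w$ at $p=c$: one must check that moving the forecast across the threshold never helps, which is exactly where the inequality $(p-q)^2>(c-q)^2$ on the far side of $c$ is used. Second — the real bookkeeping — the additive constants $c^2,(1-c)^2$ make $\AS(P,y;c)$ diverge when integrated over all of $\R$, since its integrand does not vanish at $\pm\infty$; the statement is therefore cleanly phrased through the score difference
\begin{equation*}
\E_Q[\AS(Q,Y;c)]-\E_Q[\AS(P,Y;c)] = \int_{\R}\big[h(Q(u);Q(u))-h(P(u);Q(u))\big]\,du ,
\end{equation*}
whose integrand is non-negative by the case split above and which is finite whenever the members of $\mathcal{P}$ have a finite first moment (near $\pm\infty$ both $P$ and $Q$ lie on the same side of $c$, so the integrand equals a positive constant times $(P(u)-Q(u))^2\le|P(u)-Q(u)|$, which is integrable there). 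I would run the entire argument at the level of this difference.
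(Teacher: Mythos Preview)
Your proof is correct and takes a genuinely different route from the paper's. The paper argues compositionally: it identifies the ACPS as the Matheson--Winkler integral (over thresholds $u$) of Winkler's asymmetric transform of the binary quadratic score, and invokes the two cited results to say that each step preserves strict properness. Your argument is direct and self-contained: you compute the expected score, reduce it to a pointwise maximisation in $p\in[0,1]$ via Fubini, derive the clean identity $(1-q)(c^2-p^2)+q\big((1-c)^2-(1-p)^2\big)=(c-q)^2-(p-q)^2$, and dispatch the piecewise weight $w(p)$ by a two-case split around $p=c$. The paper's approach has the virtue of exhibiting the ACPS as an instance of a general recipe, so the same argument immediately yields other asymmetric proper rules by swapping in a different baseline binary score; your approach, by contrast, makes the mechanism of properness completely explicit and requires no outside references. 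More importantly, you notice and handle a genuine technical point that the paper's proof passes over: the integrand of \eqref{eq:asymmetric_score} tends to $1$ as $u\to\pm\infty$, so $\AS(P,y;c)$ as literally written is $+\infty$, and the properness inequality must be read at the level of the score difference, whose integrand you show is non-negative and integrable under a first-moment condition. That is the right way to make the statement rigorous.
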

\begin{proof}
The strict properness derives from the fact that \AS\, can be obtained from the quadratic score for binary outcomes, which is strictly proper, via two transformations that preserve properness, see \cite{winkler1994evaluating} and \cite{matheson1976scoring}.
Specifically, let $p \in (0,1)$ be a probabilistic forecast of success in a binary experiment and let $S$ be the quadratic rule, that is
\begin{align*} \small
S(p) = \begin{cases}
S_1(p) = 1-(1-p)^2, & \text{ if success}, \\[5pt]
S_2(p) = 1-p^2,     & \text{ if failure}.
\end{cases}
\end{align*}
Notice that $S(p)$ is a strictly proper and symmetric scoring rule.
Following \cite{winkler1994evaluating}, one can obtain a strictly proper asymmetric scoring rule for binary outcomes via the transformation
\begin{align*} \small \hspace*{-4.5ex}
S_c^A(p) = \begin{cases}
\dfrac{S_1(p)-S_1(c)}{T(c)}, & \text{ if success}, \\[12pt]
\dfrac{S_2(p)-S_2(c)}{T(c)}, & \text{ if failure},
\end{cases}
\qquad
T(c) = \begin{cases}
S_1(1)-S_1(c), & \text{ if } p > c, \\[5pt]
S_2(0)-S_2(c), & \text{ if } p \leq c,
\end{cases}
\end{align*}
where $c \in (0,1)$ denotes the level of asymmetry.
Then, following  \cite{matheson1976scoring}, to obtain an asymmetric scoring rule for continuous variables, we 
assume that the subject assigns a probability distribution function $P(x)$ to a continuous variable of interest. Fix an arbitrary real number $u$ to divide the real line into two intervals, $I_1=\II(-\infty,u]$ and $I_2=\II(u, \infty)$, and define a success the event that $y$ falls in $I_1$. Since $P(u) \in (0,1)$ for any $u\in\R$, we can evaluate the binary scoring rule $S_c^A$ at $p = P(u)$, thus obtaining a different value $S_c^A(P(u))$ for each $u$.
Finally, the dependence of the scoring rule on the arbitrary value of $u$ is removed by integrating over all $u$, which yields eq.~\eqref{eq:asymmetric_score}.

Notice that one can obtain a different (strictly) proper asymmetric scoring rule as long as the baseline score is (strictly) proper.
\end{proof}

The integrals in eq.~\eqref{eq:asymmetric_score} can be numerically approximated by truncating the domain to $[u_{min},y]$ and $[y,u_{max}]$ such that
\begin{equation}
\small
\begin{aligned}
& \AS  (P,y;c) \approx \sum_{i=1}^{N} w_{2,i}^y \big( c^2 - P(u_{2,i}^y)^2 \big) \Big[ \frac{1}{(1-c)^2} \II(P(u_{2,i}^y) > c) + \frac{1}{c^2} \II(P(u_{2,i}^y) \leq c) \Big] \\
 & + \sum_{i=1}^{N} w_{1,i}^y \big( (1-c)^2 - (1-P(u_{1,i}^y))^2 \big) \Big[ \frac{1}{(1-c)^2} \II(P(u_{1,i}^y) > c) + \frac{1}{c^2} \II(P(u_{1,i}^y) \leq c) \Big],
\end{aligned}
\label{eq:asymmetric_score_approx}
\end{equation}
where $(w_{1,i}^y,u_{1,i}^y)_i$ and $(w_{2,i}^y,u_{2,i}^y)_i$, for $i=1,\ldots,N$, are the weights and locations of two Gaussian quadratures of $N$ points on $[y,u_{max}]$ and $[u_{min},y]$, respectively.

\begin{rem}
In Bayesian statistics it is current practice the use of predictive distributions, mostly in the form of Monte Carlo samples from posterior predictive distributions of quantities of interest.
The asymmetric scoring rule \AS\, can be easily computed using the output of a Markov chain Monte Carlo algorithm by approximating the predictive distribution via the empirical cumulative distribution function (empirical CDF) and using it as a probabilistic forecast $P$.
\end{rem}

To get an insight of the shape of the \AS\, for varying levels of asymmetry, $c$, we consider two examples: one with several probabilistic forecasts and the other with a fixed forecast.

\begin{figure}[!th]
\centering
\setlength{\abovecaptionskip}{0pt}
\hspace*{-4.5ex}
\begin{tabular}{ c c } 
\includegraphics[trim= 10mm 0mm 10mm 0mm,clip,height= 4.0cm, width= 7.0cm]{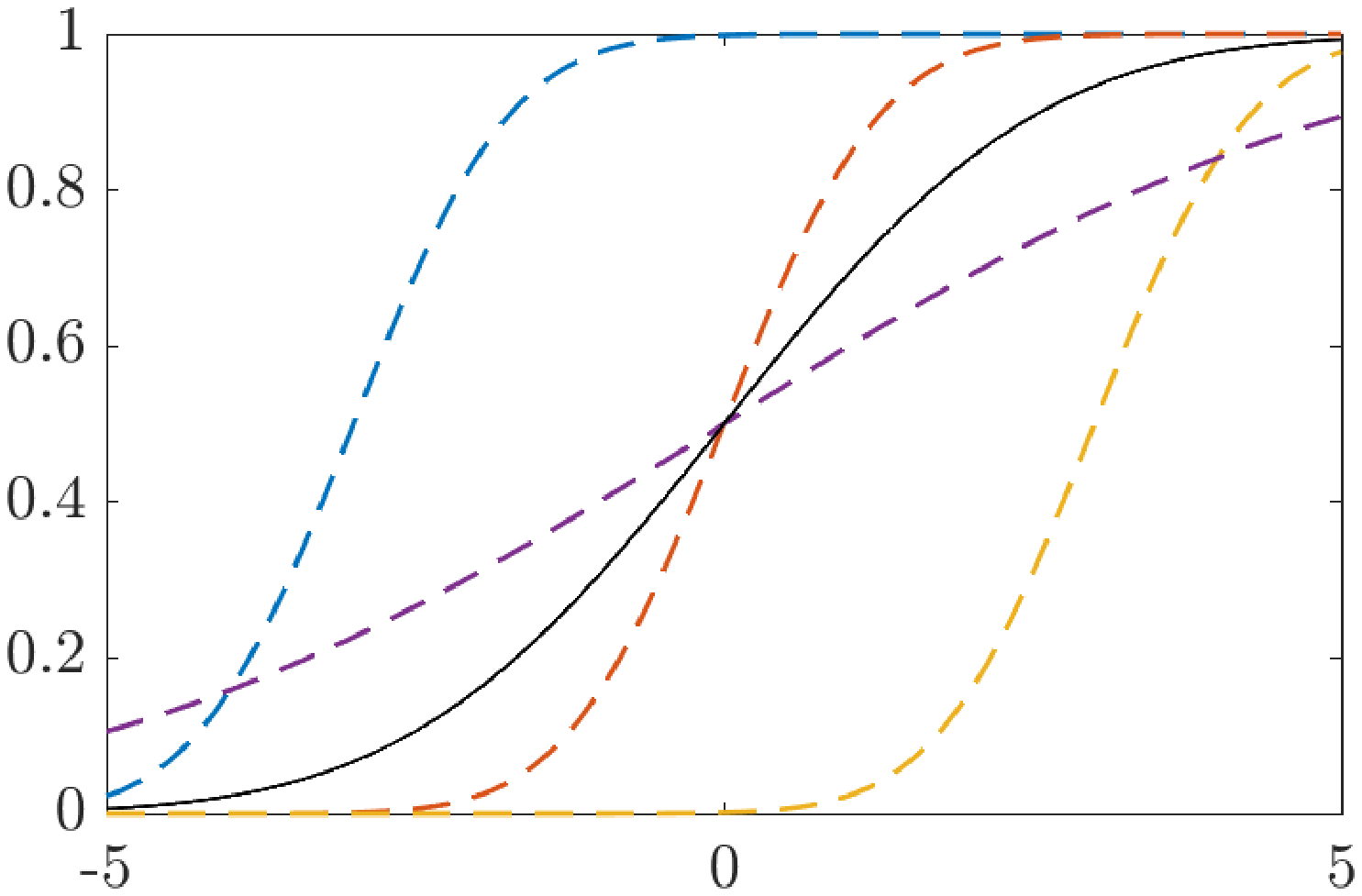} & 
\includegraphics[trim= 5mm 0mm 10mm 0mm,clip,height= 4.0cm, width= 7.0cm]{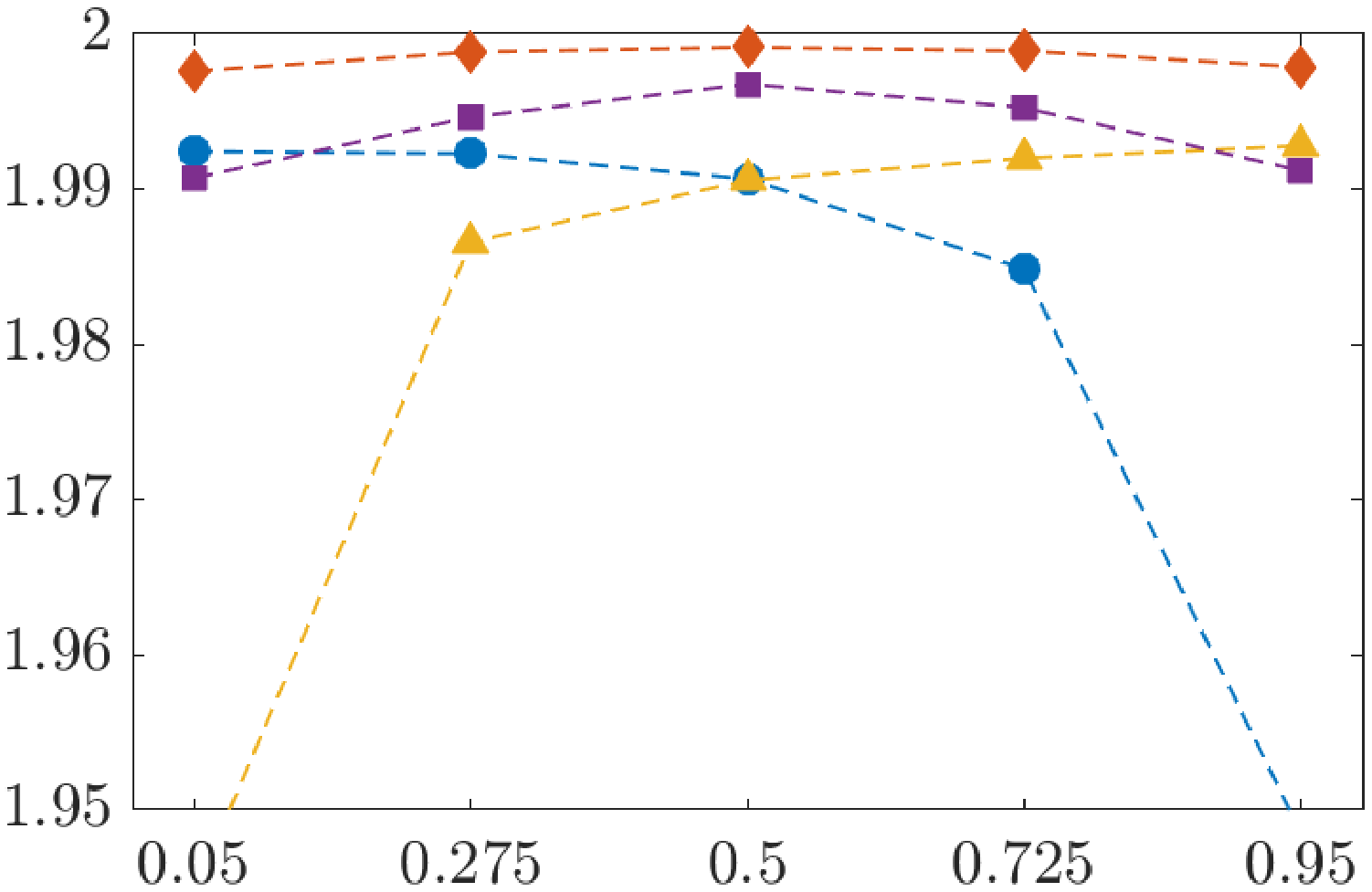}
\end{tabular}
\caption{Asymmetric scoring rule $\AS(P,y;c)$ for different forecasting densities $P$ and asymmetry level $c$. The observed value is fixed at $y=0$ and the true density is $\mathcal{N}(0,4)$.
Left panel: cumulative distribution functions of true density (solid, black) and forecasting densities: $\mathcal{N}(-3,1)$ (dashed, blue), $\mathcal{N}(0,1)$ (dashed, orange), $\mathcal{N}(3,1)$ (dashed, yellow), $\mathcal{N}(0,16)$ (dashed, purple).
Right panel: value of the asymmetric scoring rule $\AS(P,y;c)$ against the asymmetry level $c \in \{ 0.05, 0.275, 0.50, 0.725, 0.95 \}$, for each forecasting density (same colors as left panel).}
\label{fig:score_varyingP}
\end{figure}

\begin{exa}
Let us consider several Gaussian probabilistic forecasts $P$. In Fig.~\ref{fig:score_varyingP} we show the value of the score on a range of asymmetry values $c \in \{ 0.05, 0.275, 0.50, 0.725, 0.95 \}$, for a given observation $y$ whose true density is a standard Gaussian.
When the density forecast is Gaussian with the same mean as the target, the score is an inverse U-shaped function of the asymmetry level $c$. This is essentially due to the symmetry of the Gaussian distribution around its mean, since the probability mass in excess on the right tail is exactly equal to the mass lacking on the left one.
However, notice that a higher score is assigned to $\mathcal{N}(0,1)$, as compared to $\mathcal{N}(0,16)$.
Instead, the density forecasts $\mathcal{N}(-3,1)$ and $\mathcal{N}(3,1)$ receive a high penalty for high and small levels of $c$, respectively. This shows that values of $c$ close to $1$ heavily penalise forecasting densities that put more mass on the left part of the support as compared to the target, and conversely for values of $c$ close to $0$.
\end{exa}



\begin{exa}
Let us consider an alternative case when we keep fixed the probabilistic forecast to $\mathcal{N}(2,1)$ and inspect the value of the \AS \, for alternative target densities. As expected (see Fig.~\ref{fig:score_varyingY}), when the true density assigns more mass on the left part of the support as compared to the $\mathcal{N}(2,1)$, the forecast receives a very low score especially for $c$ close to $0$. Conversely, when the underlying true density is $\mathcal{N}(3,1)$ the forecast receives a higher reward for $c=0.05$, since its CDF is basically a left-shifted version of the target.
\end{exa}


\begin{figure}[!th]
\centering
\setlength{\abovecaptionskip}{0pt}
\hspace*{-4.5ex}
\begin{tabular}{ c c } 
\includegraphics[trim= 10mm 0mm 10mm 0mm,clip,height= 4.0cm, width= 7.0cm]{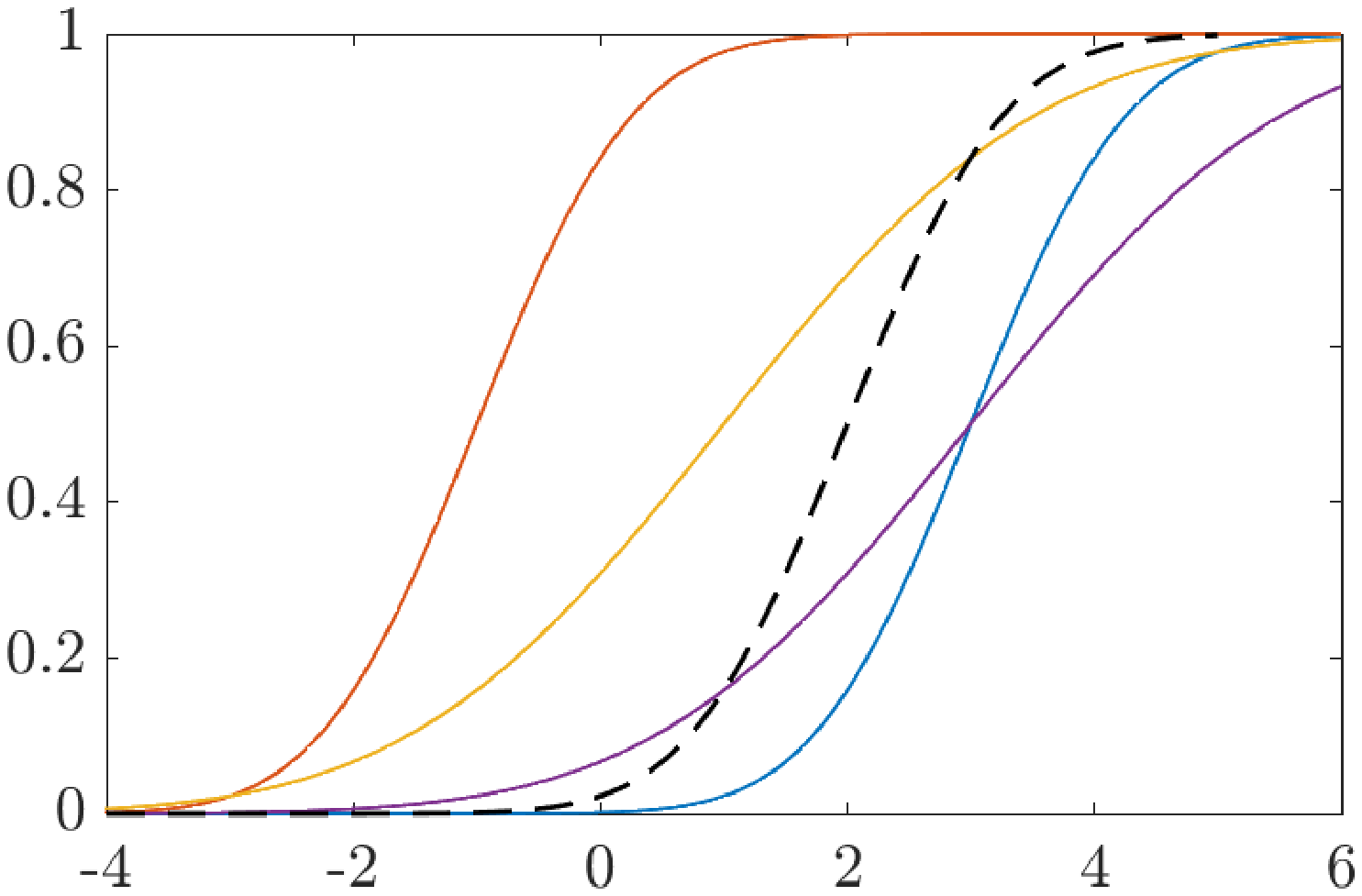} & 
\includegraphics[trim= 5mm 0mm 10mm 0mm,clip,height= 4.0cm, width= 7.0cm]{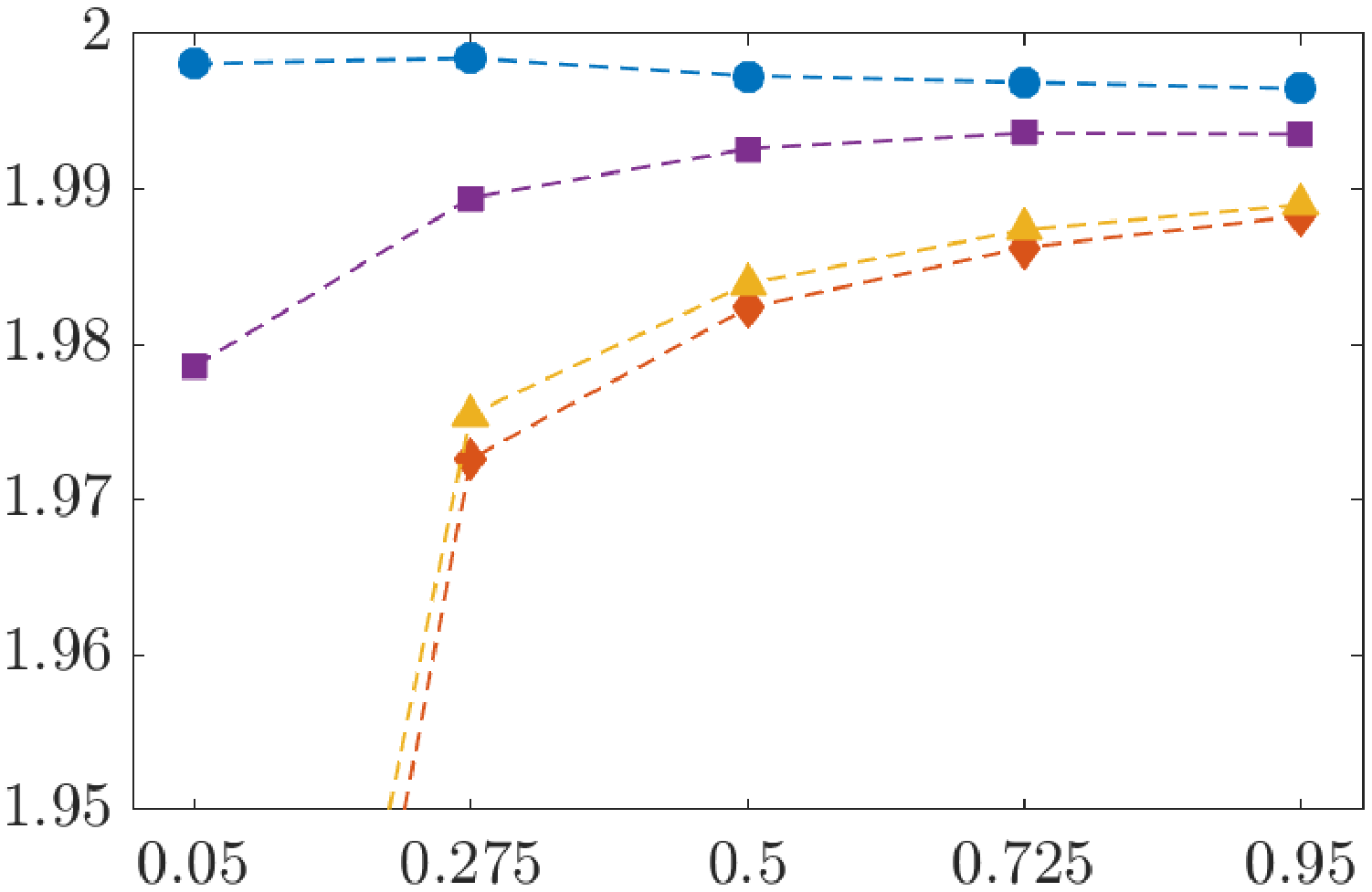} 
\end{tabular}
\caption{Asymmetric scoring rule $\AS(P,y;c)$ for different observed values $y$ and asymmetry level $c$. The forecasting density is fixed at $P = \mathcal{N}(2,1)$.
Left panel: cumulative distribution functions of the forecasting density (dashed, black) and of observation densities: $\mathcal{N}(3,1)$ (solid, blue), $\mathcal{N}(-1,1)$ (solid, orange), $\mathcal{N}(1,4)$ (solid, yellow), $\mathcal{N}(3,4)$ (solid, purple).
Right panel: value of the asymmetric scoring rule $\AS(P,y;c)$ against the asymmetry level $c \in \{ 0.05, 0.275, 0.50, 0.725, 0.95 \}$, for each observation density (same colors as left panel).}
\label{fig:score_varyingY}
\end{figure}


\subsection{Threshold and quantile-weighted versions}

In addition to asymmetric preferences towards under- or overestimation, a decision maker is usually concerned with a precise forecast in a specific range of all possible values.
Therefore, it is important to have a tool that allows to assign heterogeneous weights to various regions of the set of possible values of the variable. This calls for a scoring rule able to account for both error-weighting, i.e. asymmetric preferences and domain-weighting of density forecasts.

\cite{gneiting2011comparing} modified the CRPS by re-weighting the loss according to a user-specified weight function, which allows to select the regions where the decision-maker has greater concern.
By exploiting the representation of the CRPS in terms of quantile functions, they define a threshold-weighted (tCRPS) and quantile-weighted (qCRPS) score functions as follows
\begin{align}
tCRPS(P,y) & = \int_{-\infty}^{+\infty} \abs{P(z) - \II(y \leq z)}^2 w(z) \: dz, \\
qCRPS(P,y) & = \int_0^1 2\big(  \II(y \leq P^{-1}(\alpha)) -\alpha \big) (P^{-1}(\alpha)-y) v(\alpha) \: d\alpha,
\end{align}
where $w(z) \geq 0$ and $v(\alpha) \geq 0$ are the weight functions and level $\alpha \in (0,1)$.
Table~\ref{tab:weights} reports some examples of weighting functions for the case of real-valued variables of interest; notice that the uniform weight, $w(z)=1$ and $v(\alpha)=1$, leads to the standard CRPS. See \cite{ltrg2017} for discussion and applications of these scoring rules.

%
\begin{table}[!th]
\centering
\caption{Examples of weight functions for threshold-weighted and quantile-weighted CRPS, and variables supported on the real line.
$\phi,\Phi$ denote the probability density and cumulative distribution functions of the standard Normal distribution, respectively, with $x \in \R$ and $\alpha \in (0,1)$.}
\label{tab:weights}
\footnotesize
\begin{tabular}{lll}
\hline
Emphasis   & Threshold weight function   & Quantile weight function \\
\hline
uniform    & $w(x) = 1$ 					& $v(\alpha) = 1$ \\
center     & $w(x) = \phi(x)$ 			& $v(\alpha) = \alpha(1-\alpha)$ \\ 
tails      & $w(x) = 1-\phi(x)/\phi(0)$ 	& $v(\alpha) = (2\alpha-1)^2$ \\
right tail & $w(x) = \Phi(x)$ 			& $v(\alpha) = \alpha^2$ \\
left tail  & $w(x) = 1-\Phi(x)$ 			& $v(\alpha) = (1-\alpha)^2$ \\
\hline
\end{tabular}
\end{table}

%
%

The definition of \AS\, in \eqref{eq:asymmetric_score} can be modified to address this issue and obtain a threshold-weighted and a quantile-weighted asymmetric scoring rule, as follows.

\begin{definition}[Threshold-weighted \AS]
Let $G(du)$ be a positive measure\footnote{Notice that $G(du)$ is not required to be a probability measure.}.
We define the threshold-weighted asymmetric continuous probability score (t\AS), as
\begin{equation}
\small
\begin{aligned}
t\AS & (P,y;c) = \int_{-\infty}^y \big( c^2 - P(u)^2 \big) \Big[ \frac{1}{(1-c)^2} \II(P(u) > c) + \frac{1}{c^2} \II(P(u) \leq c) \Big] \: G(du) \\
 & + \int_y^{+\infty} \big( (1-c)^2 - (1-P(u))^2 \big) \Big[ \frac{1}{(1-c)^2} \II(P(u) > c) + \frac{1}{c^2} \II(P(u) \leq c) \Big] \: G(du),
\end{aligned}
\label{eq:asymmetric_score_threshold_weighted}
\end{equation}
where $c\in (0,1)$ is the level of asymmetry and $P$ is the probabilistic forecast and $y$ the value that materializes.
\end{definition}

\begin{definition}[Quantile-weighted \AS]
Let $p(u)$ denote the probability density function of $P(u)$ and let $P^{-1}(\alpha)$ be the corresponding quantile function at $\alpha \in [0,1]$.
Let $V(d\alpha)$ be a positive measure on the unit interval.
We define the quantile-weighted asymmetric continuous probability score (q\AS), as
\begin{equation}
\small
\begin{aligned}
q\AS & (P,y;c) = \int_{0}^{P(y)} \!\!\! \big( c^2 - \alpha^2 \big) \Big[ \frac{1}{(1-c)^2} \II(\alpha > c) + \frac{1}{c^2} \II(\alpha \leq c) \Big] \, \frac{1}{p(P^{-1}(\alpha))} \: V(d\alpha) \\
 & + \int_{P(y)}^{1} \!\!\!\!\!\! \big( (1-c)^2 - (1-\alpha)^2 \big) \Big[ \frac{1}{(1-c)^2} \II(\alpha > c) + \frac{1}{c^2} \II(\alpha \leq c) \Big] \, \frac{1}{p(P^{-1}(\alpha))} \: V(d\alpha).
\end{aligned}
\label{eq:asymmetric_score_quantile_weighted}
\end{equation}
\end{definition}

As stated for \AS, we can provide evidence of the properness of the two novel scores defined in eq.~\eqref{eq:asymmetric_score_threshold_weighted} and eq.~\eqref{eq:asymmetric_score_quantile_weighted}.

\begin{theorem}[Properness of $t\AS$, $q\AS$]
For any $c \in (0,1)$, it holds:
\begin{enumerate}[label=\alph*)]
\item the threshold-weighted asymmetric continuous probability score $t\AS$ in eq.~\eqref{eq:asymmetric_score_threshold_weighted} is strictly proper;
\item the quantile-weighted asymmetric continuous probability score $q\AS$ in eq.~\eqref{eq:asymmetric_score_quantile_weighted} is strictly proper.
\end{enumerate}
\end{theorem}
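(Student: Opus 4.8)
The plan is to obtain both weighted scores from the binary asymmetric quadratic score $S_c^A$ in exactly the way $\AS$ itself was obtained in the proof of Theorem~\ref{thm:properness_APSR}, and then to verify that the only new ingredients---the positive threshold measure $G$ in~\eqref{eq:asymmetric_score_threshold_weighted} and the positive quantile measure $V$ in~\eqref{eq:asymmetric_score_quantile_weighted}---do not destroy strict properness. Throughout I would write $\bar S_c^A(\hat r,r):=r\,S_c^A(\hat r;\text{success})+(1-r)\,S_c^A(\hat r;\text{failure})$ for the expected binary score when the report is $\hat r$ and the true probability of success is $r$; by \cite{winkler1994evaluating} this is maximised over $\hat r$, uniquely at $\hat r=r$, for every $r\in(0,1)$.

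For part (a) I would first observe that, for each fixed $u$, the integrand of $t\AS$ at $u$ equals $S_c^A$ evaluated at report $P(u)$ on the binary experiment whose outcome is success iff $y\le u$, so that $t\AS(P,y;c)=\int_{\R}S_c^A\big(P(u)\big)\,G(du)$. Taking the expectation under the true law $Q$ and interchanging it with the outer integral gives $t\AS(P,Q)=\int_{\R}\bar S_c^A\big(P(u),Q(u)\big)\,G(du)$; it is convenient to work directly with the difference
\[
t\AS(Q,Q)-t\AS(P,Q)=\int_{\R}\big[\bar S_c^A(Q(u),Q(u))-\bar S_c^A(P(u),Q(u))\big]\,G(du),
\]
whose integrand vanishes as $u\to\pm\infty$ (at the rate $(P(u)-Q(u))^2$, as for the CRPS), so that it is a well-defined element of $[0,+\infty]$ regardless of the additive normalisation of $S_c^A$. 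Strict properness of $S_c^A$ makes the bracket nonnegative pointwise in $u$, and integrating against the positive measure $G$ yields $t\AS(Q,Q)\ge t\AS(P,Q)$; equality forces $P(u)=Q(u)$ for $G$-almost every $u$, and by right-continuity of the distribution functions this gives $P=Q$, provided $G$ charges every non-empty open interval---the analogue of the requirement $w>0$ a.e.\ used for the $t$CRPS in \cite{gneiting2011comparing}.

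For part (b) the natural route is the change of variables $\alpha=P(u)$, valid on the support of $P$ whenever $P$ is continuous and strictly increasing there: it converts the threshold integral defining $\AS$ into the quantile-scale integral in~\eqref{eq:asymmetric_score_quantile_weighted}, with $1/p\big(P^{-1}(\alpha)\big)$ playing the role of the Jacobian $du/d\alpha$; in particular $q\AS$ reduces to $\AS$ when $V$ is Lebesgue measure, so part (b) genuinely extends Theorem~\ref{thm:properness_APSR}. Each $\alpha$-slice can then be read as $S_c^A$ applied to the \emph{level} experiment ``$P(y)\le\alpha$'', whose honest report for a forecaster issuing $P$ is exactly $\alpha$ since the probability integral transform of $P$ under $P$ is uniform; taking the expectation under $Q$ slice by slice gives $q\AS(P,Q)=\int_0^1\bar S_c^A\big(\alpha,\,Q(P^{-1}(\alpha))\big)\,V(d\alpha)/p(P^{-1}(\alpha))$, which at $P=Q$ equals $\int_0^1\bar S_c^A(\alpha,\alpha)\,V(d\alpha)/q(Q^{-1}(\alpha))$. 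The hard part---the step I expect to dominate the argument---is to show that this expression is maximised at $P=Q$: here the forecast enters both through the argument $Q(P^{-1}(\alpha))$ of $\bar S_c^A$ and through the density Jacobian, not merely through a ``report'' slot, so the pointwise comparison used in part (a) does not transfer and one must instead argue at the level of the full quantile representation, controlling the interaction between the reweighting $V(d\alpha)$ and the factor $1/p(P^{-1}(\alpha))$ (and, if necessary, restricting to the natural class $\mathcal P$ of admissible forecasts). Once strict properness modulo a $V$-null set of levels $\alpha$ is in hand, the implication ``equality $\Rightarrow P=Q$'' follows as in part (a), under the corresponding support condition on $V$.
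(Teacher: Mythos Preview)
The paper's own proof is the single sentence ``The result follows from Theorem~\ref{thm:properness_APSR} and \cite{matheson1976scoring}.'' For part~(a) your argument is exactly this route, written out: each threshold $u$ contributes the strictly proper asymmetric binary score $S_c^A$ at the report $P(u)$, taking expectations under $Q$ gives the pointwise defect $\bar S_c^A(Q(u),Q(u))-\bar S_c^A(P(u),Q(u))\ge 0$, and integrating against a positive measure $G$ preserves the inequality. You also add the full-support condition on $G$ needed to upgrade properness to \emph{strict} properness, a caveat the paper's statement and proof omit; so on (a) you follow the paper but are more careful.

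For part~(b), however, your proposal is explicitly incomplete. You perform the change of variables $\alpha=P(u)$, record that the expected score under $Q$ is $\int_0^1 \bar S_c^A\big(\alpha,\,Q(P^{-1}(\alpha))\big)\,V(d\alpha)/p(P^{-1}(\alpha))$, note that the forecast $P$ now enters both through $Q\circ P^{-1}$ and through the Jacobian, and then stop, saying only that one ``must instead argue at the level of the full quantile representation''. That is a genuine gap: the slice-by-slice maximisation that drives~(a) no longer applies, because at fixed $\alpha$ the report is $\alpha$ itself and what the forecaster controls is the threshold $P^{-1}(\alpha)$ together with the density factor $1/p(P^{-1}(\alpha))$; equivalently, undoing the change of variables yields a threshold integral with weight $v(P(u))$ that \emph{depends on the forecast}, which falls outside the Matheson--Winkler mixture argument. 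The paper's one-line citation does not visibly address this point either, so your more careful analysis in fact surfaces a difficulty the paper glosses over; but as written your proof of~(b) does not close.
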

\begin{proof}
The result follows from Theorem~\ref{thm:properness_APSR} and \cite{matheson1976scoring}.
\end{proof}

Both t\AS\, and q\AS\, can be computed by approximating eq.~\eqref{eq:asymmetric_score_threshold_weighted} and eq.~\eqref{eq:asymmetric_score_quantile_weighted} in a way analogous to eq.~\eqref{eq:asymmetric_score_approx}.
The main advantage of the t\AS\, and q\AS\, consists in the ability to consider two levels of asymmetry: in terms of the loss at each point, and over different regions of the domain.
This is fundamental to answer the need of the decision maker who is concerned with the performance of the forecast in a given interval of possible values (e.g., the right tail) and who has an aversion to particular deviations from the target (e.g., averse to underestimation).

Tab.~\ref{tab:CRPS_vs_asymmetric_score} provides a summary of some key differences between the CRPS and \AS, and the corresponding weighted versions.

%

\begin{table}[!th]
\centering
\caption{Examples of scoring rules for evaluating density forecasts.}
\label{tab:CRPS_vs_asymmetric_score}
\small
\setlength{\tabcolsep}{6pt}
\renewcommand*{\arraystretch}{1.15}
\begin{tabular}{ c l | c c }
\hline \hline
 & & \multicolumn{2}{c}{\textbf{Domain}} \\
 & & uniform & weighted \\
\hline
\multirow{2}{*}{\begin{rotate}{90} \hspace*{-12pt} \textbf{Loss} \end{rotate}} & symmetric  & CRPS & tCRPS, qCRPS \\
 & asymmetric & \AS & t\AS, q\AS \\
\hline
\end{tabular}
\end{table}


\begin{rem}[Multivariate case]
The proposed asymmetric scores can be easily generalized to multivariate settings.
To this aim, denote with $\mathcal{Q}$ the class of the Borel probability measures on $\R^n$ and let $F \in \mathcal{Q}$ be a probabilistic forecast identified via its cumulative distribution function, $P$.
Let $c \in (0,1)$ represent the level of asymmetry and denote with $\mathbf{y} = (y_1,\ldots,y_n)'$ the multivariate value that materializes.
The multivariate version of the asymmetric continuous probability score is defined as
\begin{equation}
\small
\begin{aligned}
& \AS (P,\mathbf{y};c) = \\
 & = \int_{-\infty}^{y_n} \cdots \int_{-\infty}^{y_1} \big( c^2 - P(\mathbf{u})^2 \big) \Big[ \frac{1}{(1-c)^2} \II(P(\mathbf{u}) > c) + \frac{1}{c^2} \II(P(\mathbf{u}) \leq c) \Big] \: d\mathbf{u} \\
 & + \int_{y_n}^{+\infty} \cdots \int_{y_1}^{+\infty} \big( (1-c)^2 - (1-P(\mathbf{u}))^2 \big) \Big[ \frac{1}{(1-c)^2} \II(P(\mathbf{u}) > c) + \frac{1}{c^2} \II(P(\mathbf{u}) \leq c) \Big] \: d\mathbf{u},
\end{aligned}
\label{eq:asymmetric_score_multivariate}
\end{equation}
where $d\mathbf{u} = du_1 \cdots du_n$.
Moreover, one can define a multivariate threshold-weighted \AS \, by substituting the product of Lebesgue measures in eq.~\eqref{eq:asymmetric_score_multivariate} with a positive measure $G(d\mathbf{u})$ on $\R^n$.
\end{rem}

In the literature on asymmetric point forecasting measures, the choice of the shape parameter(s) of the loss function has been of interest especially over the last decade.
Some works, such as \cite{Christoffersen1996} and \cite{demetrescu2019predictive} performed an empirical exercise to rank competing forecasting models under asymmetric point forecast measure using a grid of asymmetry values.

Instead, \cite{Elliott2005,elliott2008biases} and \cite{Patton2007} introduced procedures for inferring the value of the shape parameters of the forecaster's loss function. Assuming a collection of time series of forecasts of a given quantity of interest is available, they treat the loss function parameters as variable to be estimated and look for the values that would be most consistent with forecast rationality.
This approach is appealing since the loss function parameters may provide information about the forecaster's objectives. However, the main drawback of these approaches is that they rely on the availability of a time series of observed forecasts from relevant decision-makers (e.g., the IMF, the OECD, or central bankers). Therefore, the application is case specific for those data.

\subsection{Testing predictive ability}\label{sec:test}
When forecasts from multiple models are available, there is the need for statistical tools, such as tests, for assessing whether different forecasts are equally good.
In the context of point forecasts, the Diebold-Mariano (DM) test is the most frequently used test for equal forecast performance. Essentially, it is based on the loss differential, defined as 
\begin{equation*}
d_t = L(e_{1,t}) - L(e_{2,t}),
\end{equation*}
where $e_{j,t} = \hat{y}_{j,t} - y_t$ is the forecast error of model $j=1,2$ at time $t=1,\ldots,T$, $\hat{y}_{j,t}$ is the point forecast of model $j$, $y_t$ is the true value, and $L(\cdot)$ is a given loss function. The null hypothesis of equal accuracy in forecasting is $H_0: \E[d_t] = 0$ for all $t$, versus the alternative $H_1: \E[d_t] \neq 0$. It can be shown that, if the loss differential series is (i) covariance stationary, and (ii) has short memory (see e.g. \cite{mccracken2020diverging}), then under the null hypothesis
\begin{equation*}
\frac{\sqrt{T} \bar{d}}{\sqrt{2\pi f_d(0)}} \to \mathcal{N}(0,1),
\end{equation*}
where $\bar{d}$ and $f_d(0)$ are the sample mean and the spectral density (at frequency 0) of the loss differential.
Recently, \cite{mccracken2020diverging} found that the slow decay of the loss differential series is the most frequent problem that hampers the use of the Diebold-Mariano test in real data economic applications.

The density forecasting approach requires some adaptations of the Diebold-Mariano test, since the forecast is an infinite dimensional object $P$.

\begin{rem}[Modified DM test]
To test the null hypothesis of equal accuracy of two competing models in a density forecasting approach, we modify the definition of the loss differential as follows.
First, consider a proper scoring rule $S$, such as the \AS \, or the CRPS, and denote the associated loss with $S^*(y,P) = -S(y,P)$. Then, the loss differential is defined as
\begin{equation}
d_t^* = S^*(y_t,P_{1,t}) - S^*(y_t,P_{2,t}).
\end{equation}
Notice that the series $d_t^*$ has the same interpretation as $d_t$ in the original DM test, and following the same theoretical arguments one can prove that, under the null hypothesis $H_0: \E[d_t^*] = 0$ for each $t$, one has
\begin{equation}
\frac{\sqrt{T} \bar{d}^*}{\sqrt{2\pi f_{d^*}(0)}} \to \mathcal{N}(0,1),
\end{equation}
where $\bar{d}^*$ and $f_{d^*}(0)$ are the equivalent of $\bar{d}$ and $f_d(0)$ for $d_t^*$.
\end{rem}

\section{Illustrations and comparison with weighted CRPS}  \label{sec:simulations}
This section investigates the performance of the proposed asymmetric scoring rule and compares it with the CRPS.
In order to assess the good performance of our measure, we consider different target densities: (i) Gaussian, (ii) Student-t, (iii) Gamma, (iv) Beta. This range includes families of distributions with different support ($\R$, $\R_+$ and $[0,1]$), skewed and with fat tails.
For the asymmetric scoring rule \AS \, we use varying levels of asymmetry, corresponding to $c \in \{ 0.05, 0.275, 0.50, 0.725, 0.95 \}$. Recall that $c=0.50$ implies a symmetric loss.


\begin{figure}[!th]
\centering
\renewcommand*{\arraystretch}{0.9}
\begin{minipage}[c]{0.46\linewidth}
\includegraphics[trim= 10mm 0mm 10mm 0mm,clip,height= 4.0cm, width= 6.0cm]{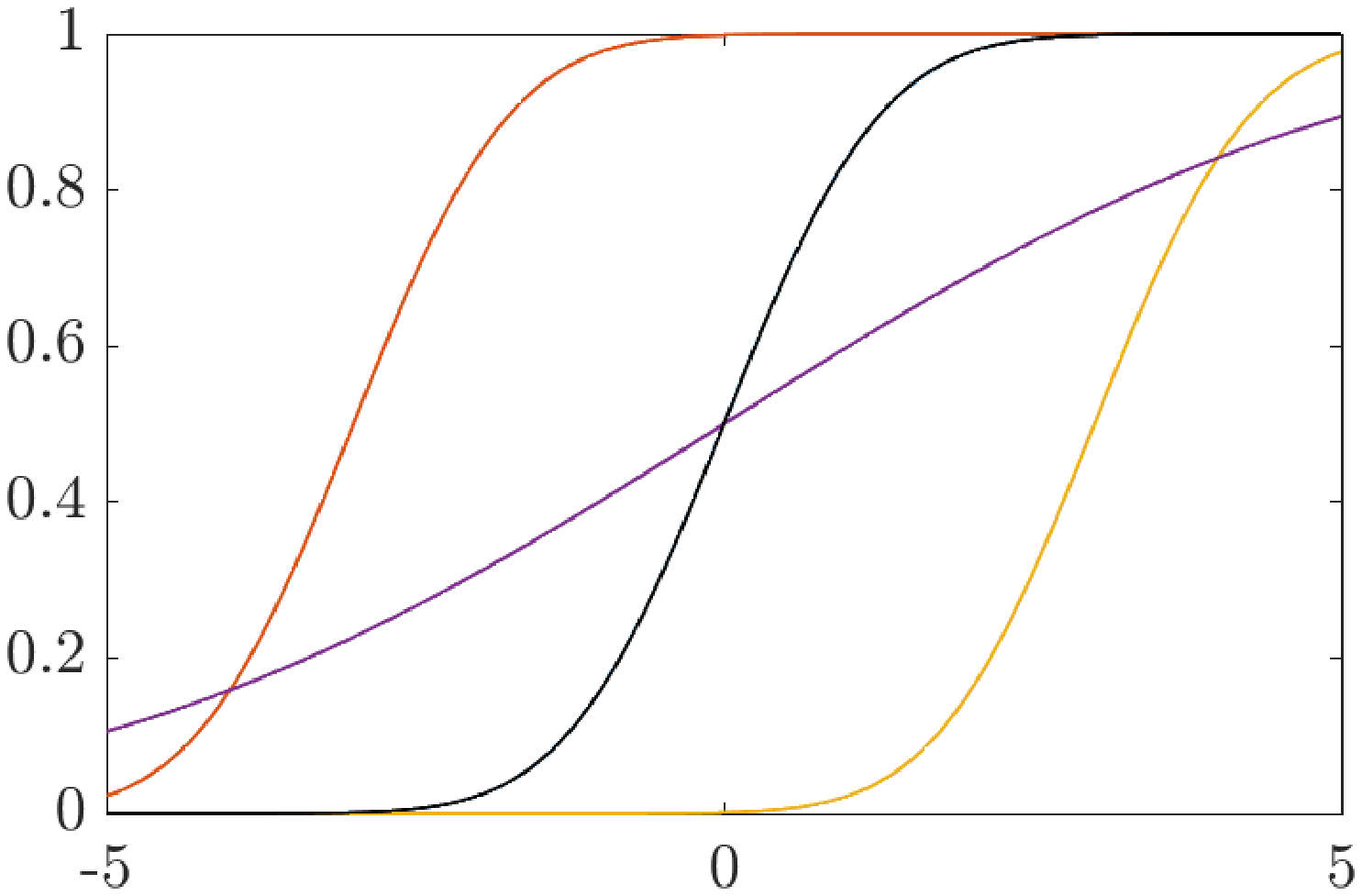}
\end{minipage}
\begin{minipage}[c]{0.46\linewidth}
\small
\begin{adjustbox}{width=1.1\textwidth, center=\textwidth}
\begin{tabular}{ l | *{4}{C{36pt}} } 
\hline \hline
& \multicolumn{4}{c}{Forecasting density} \\
& $\mathcal{N}(0,1)$& $\mathcal{N}(-3,1)$& $\mathcal{N}(3,1)$& $\mathcal{N}(0,16)$\\ 
\hline 
CRPS             & \textbf{1} & 4 & 3 & 2  \\ 
\AS$(\cdot,\cdot;0.05)$  & \textbf{1} & 2 & 4 & 3  \\ 
\AS$(\cdot,\cdot;0.275)$ & \textbf{1} & 3 & 4 & 2  \\ 
\AS$(\cdot,\cdot;0.5)$   & \textbf{1} & 4 & 3 & 2  \\ 
\AS$(\cdot,\cdot;0.725)$ & \textbf{1} & 4 & 3 & 2  \\ 
\AS$(\cdot,\cdot;0.95)$  & \textbf{1} & 4 & 2 & 3  \\ 
\hline 
\end{tabular}
\end{adjustbox}
\end{minipage}
\caption{Ranking of probabilistic forecasts. Results from $S=1$ simulation of $N=100$ observations. Density estimated with $M=500$ draws from forecasting distribution. Target is $\mathcal{N}(0,1)$ (black), forecasting densities are: $\mathcal{N}(0,1)$ (blue), $\mathcal{N}(-3,1)$ (orange), $\mathcal{N}(3,1)$ (yellow), $\mathcal{N}(0,16)$ (purple).}
\label{tab:test_NormNorm}
\end{figure}

Fig.~\ref{tab:test_NormNorm} and Fig.~\ref{tab:test_tt} provide graphical evidence of the properness of the \AS \, in two cases, with a Gaussian and a Student-t target, respectively.
Both figures show that the \AS \, rewards the forecast density which corresponds to the ground truth, for all levels of asymmetry.
In addition, we find that the ranking of the competing probabilistic forecasts changes according to the value of $c$, due to the different penalty assigned to asymmetric deviations from the target.

\begin{figure}[!th]
\centering
\renewcommand*{\arraystretch}{0.9}
\begin{minipage}[c]{0.46\linewidth}
\includegraphics[trim= 10mm 0mm 10mm 0mm,clip,height= 4.0cm, width= 6.0cm]{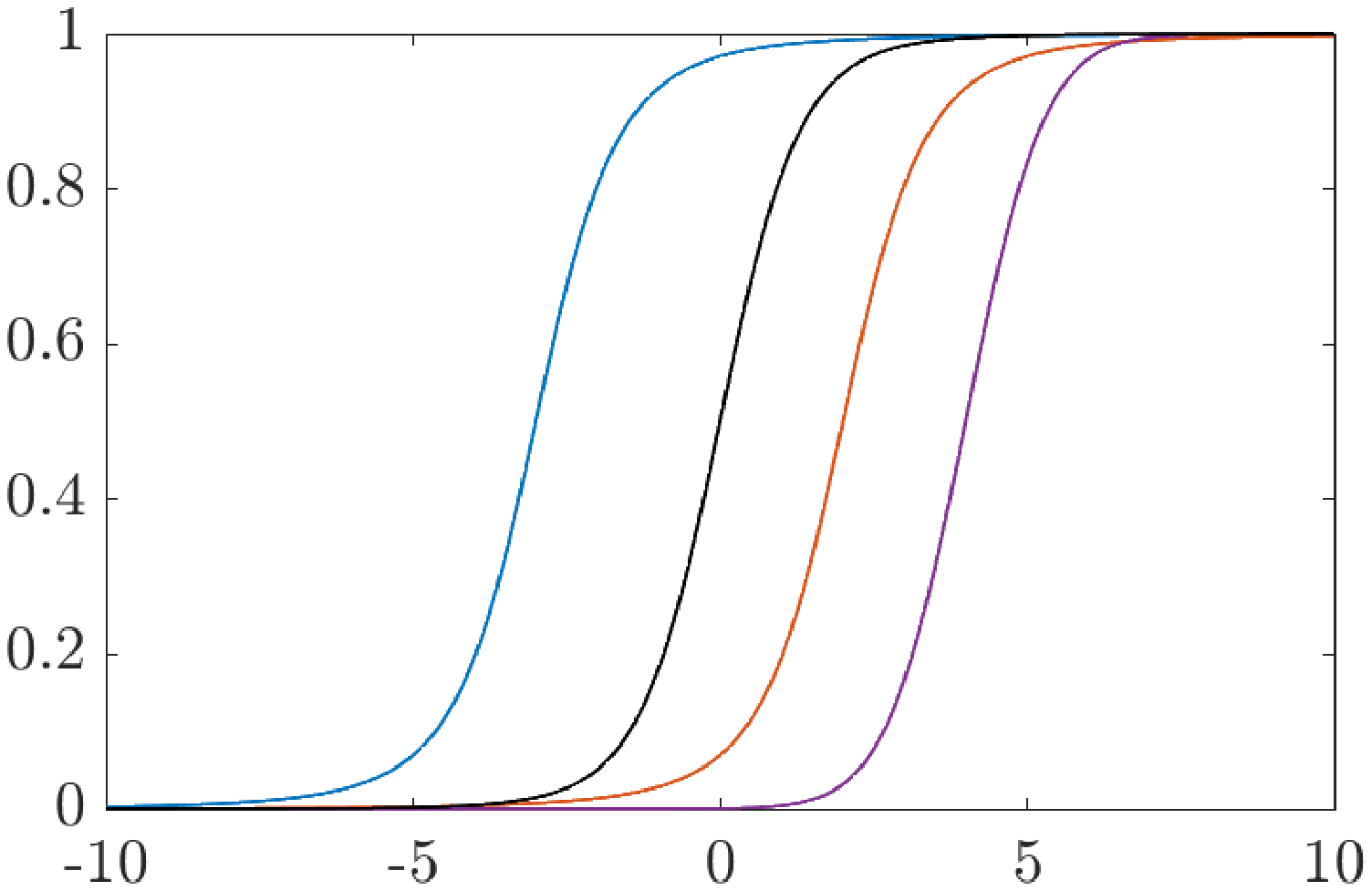}
\end{minipage}
\begin{minipage}[c]{0.46\linewidth}
\small
\begin{adjustbox}{width=1.1\textwidth, center=\textwidth}
\begin{tabular}{ l | *{4}{C{40pt}} } 
\hline \hline
& \multicolumn{4}{c}{Forecasting density} \\
& $t(-3,1,3)$& $t(2,1,3)$& $t(0,1,5)$& $t(4,1,15)$\\ 
\hline 
CRPS             & 3 & 2 & \textbf{1} & 4  \\ 
\AS$(\cdot,\cdot;0.05)$  & 3 & 2 & \textbf{1} & 4  \\ 
\AS$(\cdot,\cdot;0.275)$ & 3 & 2 & \textbf{1} & 4  \\ 
\AS$(\cdot,\cdot;0.5)$   & 3 & 2 & \textbf{1} & 4  \\ 
\AS$(\cdot,\cdot;0.725)$ & 4 & 2 & \textbf{1} & 3  \\ 
\AS$(\cdot,\cdot;0.95)$  & 4 & 2 & \textbf{1} & 3  \\ 
\hline 
\end{tabular}
\end{adjustbox}
\end{minipage}
\caption{Ranking of probabilistic forecasts. Results from $S=1$ simulation of $N=100$ observations. Density estimated with $M=500$ draws from forecasting distribution. Target is $t(0,1,5)$ (black), forecasting densities are: $t(-3,1,3)$ (blue), $t(2,1,3)$ (orange), $t(0,1,5)$ (yellow), $t(4,1,15)$ (purple).}
\label{tab:test_tt}
\end{figure}

To investigate further this aspect, Fig.~\ref{tab:test_NormNorm2} presents the ranking of forecasts when none of the candidates corresponds to the true density, which is $\mathcal{N}(2,4)$.
The CRPS indicates $\mathcal{N}(3,1)$ as the best forecast, as does the \AS \, for values of $c$ around $0.5$. However, when the \AS \, assigns more weight to the asymmetric loss, that is for $c$ close to the boundary of $(0,1)$, the ranking significantly changes.
For $c=0.05$, that is when great importance is given to underestimation of the target, the $\mathcal{N}(0,1)$ is preferred, while $\mathcal{N}(0,16)$ is the best for the opposite case, when $c=0.95$.


\begin{figure}[!th]
\centering
\renewcommand*{\arraystretch}{0.9}
\begin{minipage}[c]{0.46\linewidth}
\includegraphics[trim= 10mm 0mm 10mm 0mm,clip,height= 4.0cm, width= 6.0cm]{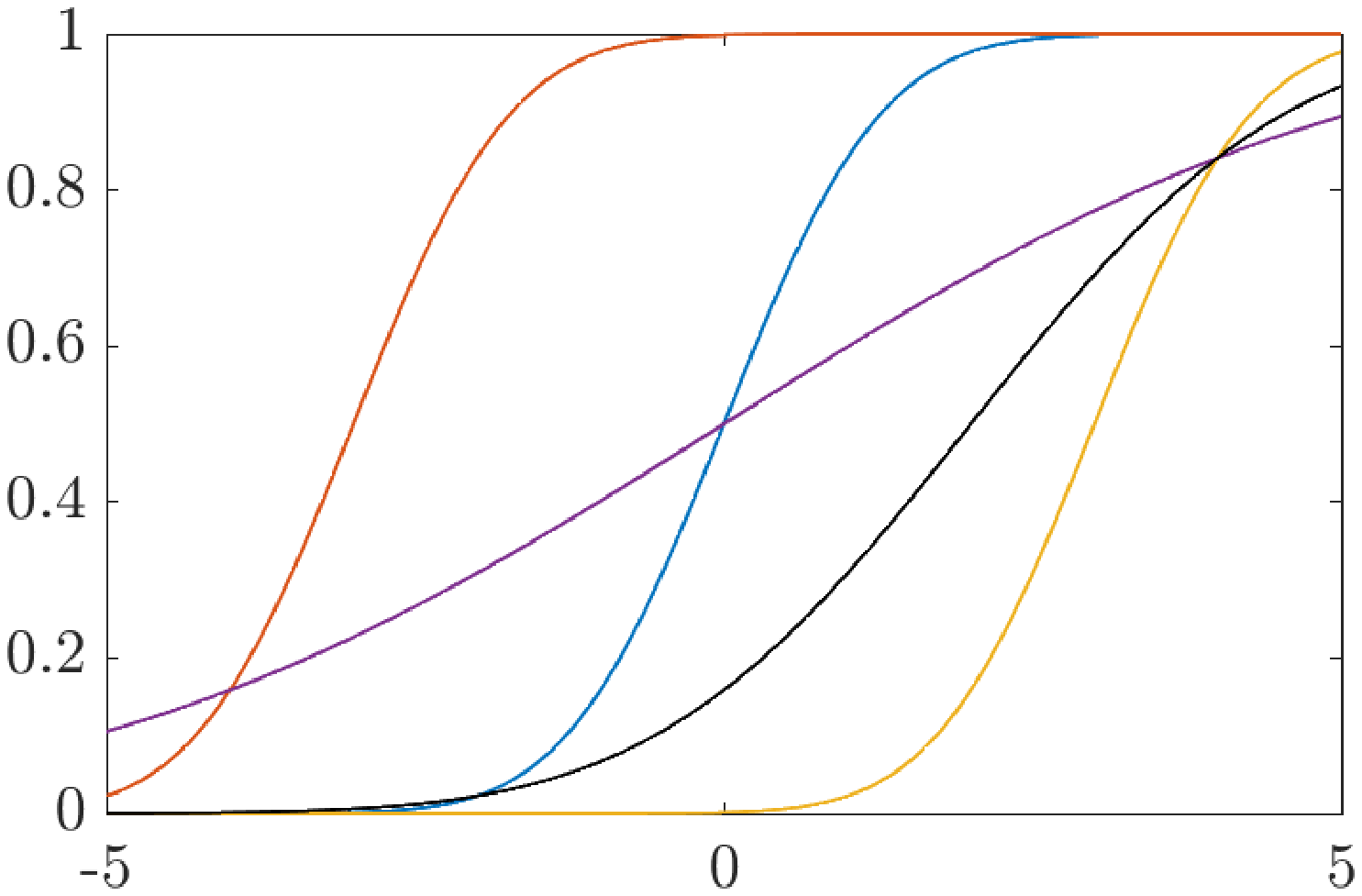}
\end{minipage}
\begin{minipage}[c]{0.46\linewidth}
\small
\begin{adjustbox}{width=1.1\textwidth, center=\textwidth}
\begin{tabular}{ l | *{4}{C{36pt}} } 
\hline \hline
& \multicolumn{4}{c}{Forecasting density} \\
& $\mathcal{N}(0,1)$& $\mathcal{N}(-3,1)$& $\mathcal{N}(3,1)$& $\mathcal{N}(0,16)$\\ 
\hline
CRPS             & 3 & 4 & \textbf{1} & 2  \\ 
\AS$(\cdot,\cdot;0.05)$  & \textbf{1} & 3 & 4 & 2  \\ 
\AS$(\cdot,\cdot;0.275)$ & 2 & 4 & \textbf{1} & 3  \\ 
\AS$(\cdot,\cdot;0.5)$   & 3 & 4 & \textbf{1} & 2  \\ 
\AS$(\cdot,\cdot;0.725)$ & 3 & 4 & \textbf{1} & 2  \\ 
\AS$(\cdot,\cdot;0.95)$  & 3 & 4 & 2 & \textbf{1}  \\ 
\hline 
\end{tabular}
\end{adjustbox}
\end{minipage}
\caption{Ranking of probabilistic forecasts. Results from $S=1$ simulation of $N=100$ observations. Density estimated with $M=500$ draws from forecasting distribution. Target is $\mathcal{N}(2,4)$ (black), forecasting densities are: $\mathcal{N}(0,1)$ (blue), $\mathcal{N}(-3,1)$ (orange), $\mathcal{N}(3,1)$ (yellow), $\mathcal{N}(0,16)$ (purple).}
\label{tab:test_NormNorm2}
\end{figure}

Many economic and financial variables in levels are inherently positive (e.g. GDP, volatility) or take values on a bounded interval (e.g., interest rate, unemployment rate). To account for these cases, we investigate the performance of the \AS \, in simulated experiments where the  target density is either Gamma or Beta.


\begin{figure}[!th]
\centering
\renewcommand*{\arraystretch}{0.9}
\begin{minipage}[c]{0.46\linewidth}
\includegraphics[trim= 10mm 0mm 10mm 0mm,clip,height= 4.0cm, width= 6.0cm]{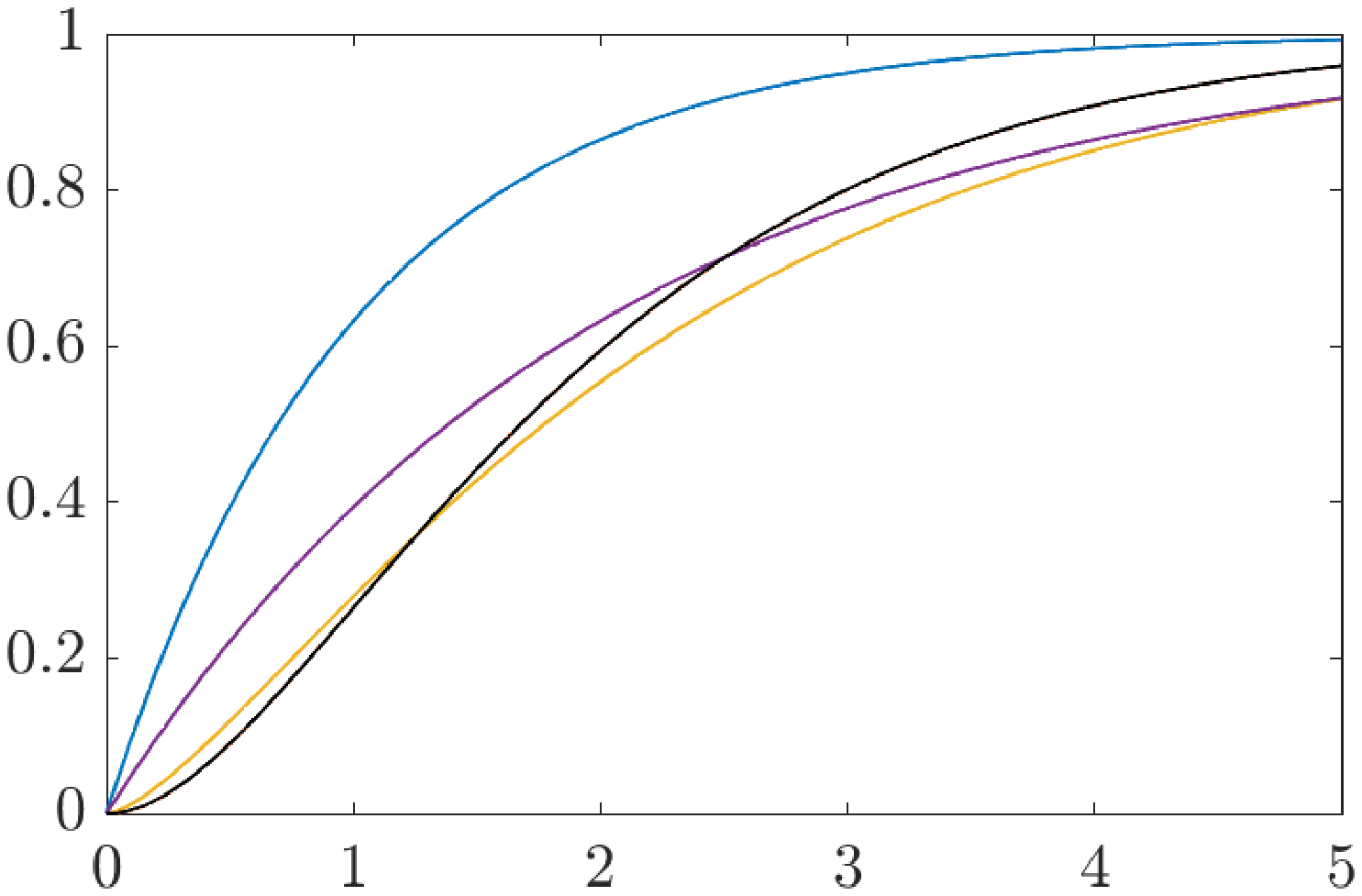}
\end{minipage}
\begin{minipage}[c]{0.46\linewidth}
\small
\begin{adjustbox}{width=1.1\textwidth, center=\textwidth}
\begin{tabular}{ l | *{4}{C{40pt}} } 
\hline \hline
& \multicolumn{4}{c}{Forecasting density} \\
& $\mathcal{G}a(1,1)$& $\mathcal{G}a(2,1)$& $\mathcal{G}a(\frac{3}{2},\frac{3}{2})$& $\mathcal{G}a(1,2)$\\ 
\hline
CRPS             & 3 & \textbf{1} & 4 & 2  \\ 
\AS$(\cdot,\cdot;0.05)$  & 4 & \textbf{1} & 2 & 3  \\ 
\AS$(\cdot,\cdot;0.275)$ & 4 & \textbf{1} & 2 & 3  \\ 
\AS$(\cdot,\cdot;0.5)$   & 4 & \textbf{1} & 2 & 3  \\ 
\AS$(\cdot,\cdot;0.725)$ & 4 & \textbf{1} & 2 & 3  \\ 
\AS$(\cdot,\cdot;0.95)$  & 2 & \textbf{1} & 3 & 4  \\ 
\hline 
\end{tabular}
\end{adjustbox}
\end{minipage}
\caption{Ranking of probabilistic forecasts. Results from $S=1$ simulation of $N=100$ observations. Density estimated with $M=500$ draws from forecasting distribution. Target is $\mathcal{G}a(2,1)$ (black), forecasting densities are: $\mathcal{G}a(1,1)$ (blue), $\mathcal{G}a(2,1)$ (orange), $\mathcal{G}a(\frac{3}{2},\frac{3}{2})$ (yellow), $\mathcal{G}a(1,2)$ (purple).}
\label{tab:test_GamGam}
\end{figure}

Fig.~\ref{tab:test_GamGam} presents the results for a $\mathcal{G}a(2,1)$ target density. By looking at the worst performing densities according to \AS, we find that $\mathcal{G}a(1,1)$ is assigned the highest penalty for values $c \leq 0.725$, while $\mathcal{G}a(1,2)$ becomes the worst for $c = 0.95$.
This reflects that for $c \leq 0.725$, the asymmetric score penalizes more the underestimation, while for $c = 0.95$ it gives more weight to overestimation.
Similar results are found in Fig.~\ref{tab:test_BeBe} with a positively skewed Beta target density, $\mathcal{B}e(1,2)$.

%

\begin{figure}[!th]
\centering
\renewcommand*{\arraystretch}{0.9}
\begin{minipage}[c]{0.46\linewidth}
\includegraphics[trim= 10mm 0mm 10mm 0mm,clip,height= 4.0cm, width= 6.0cm]{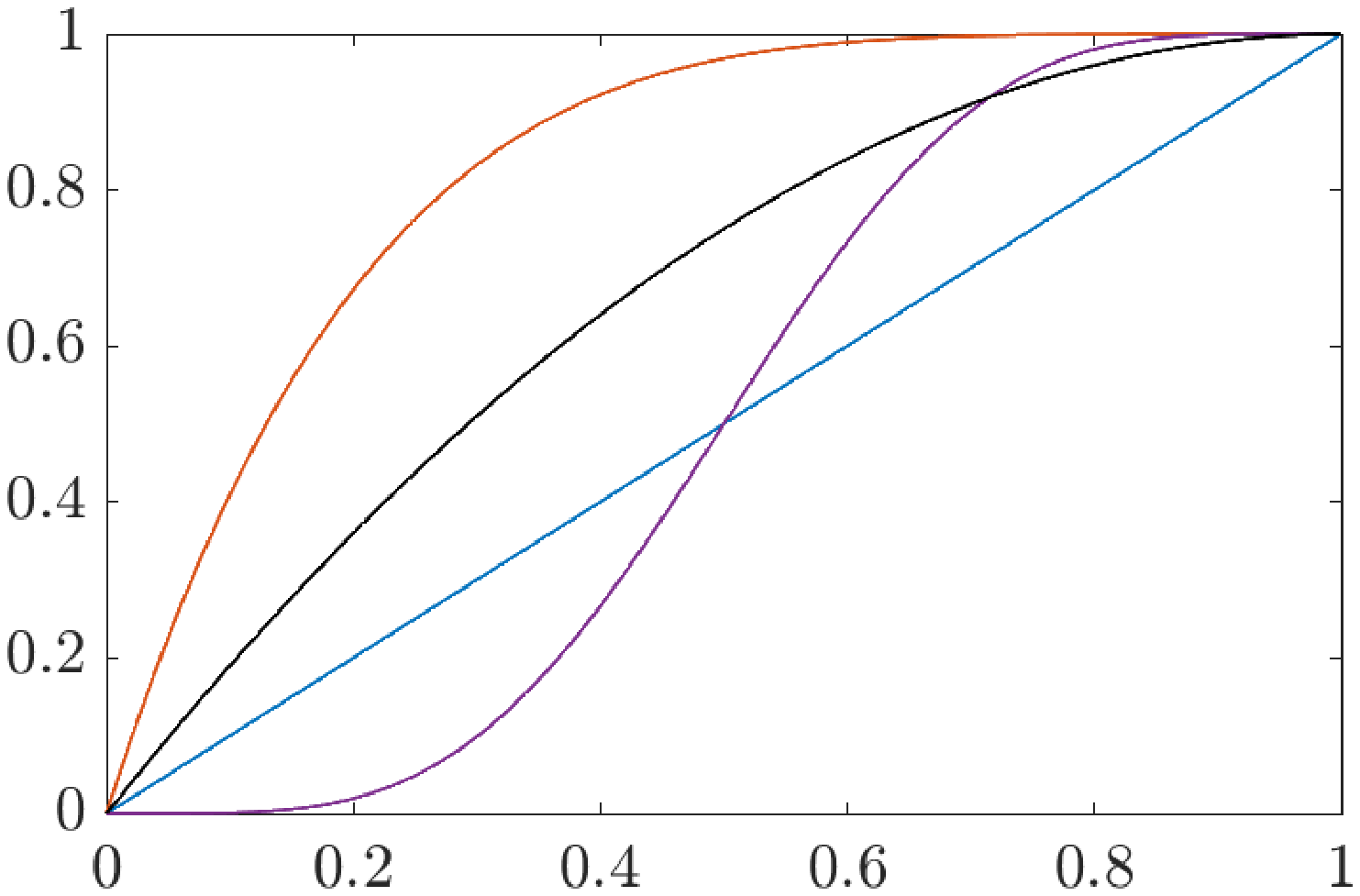}
\end{minipage}
\begin{minipage}[c]{0.46\linewidth}
\small
\begin{adjustbox}{width=1.1\textwidth, center=\textwidth}
\begin{tabular}{ l | *{4}{C{36pt}} } 
\hline \hline
& \multicolumn{4}{c}{Forecasting density} \\
& $\mathcal{B}e(1,1)$& $\mathcal{B}e(1,5)$& $\mathcal{B}e(1,2)$& $\mathcal{B}e(5,5)$\\ 
\hline
CRPS             & 3 & 2 & \textbf{1} & 4  \\ 
\AS$(\cdot,\cdot;0.05)$  & 2 & 3 & \textbf{1} & 4  \\ 
\AS$(\cdot,\cdot;0.275)$ & 3 & 2 & \textbf{1} & 4  \\ 
\AS$(\cdot,\cdot;0.5)$   & 2 & 3 & \textbf{1} & 4  \\ 
\AS$(\cdot,\cdot;0.725)$ & 3 & 4 & \textbf{1} & 2  \\ 
\AS$(\cdot,\cdot;0.95)$  & 2 & 4 & \textbf{1} & 3  \\
\hline 
\end{tabular}
\end{adjustbox}
\end{minipage}
\caption{Ranking of probabilistic forecasts. Results from $S=1$ simulation of $N=100$ observations. Density estimated with $M=500$ draws from forecasting distribution. Target is $\mathcal{B}e(1,2)$ (black), forecasting densities are: $\mathcal{B}e(1,1)$ (blue), $\mathcal{B}e(1,5)$ (orange), $\mathcal{B}e(1,2)$ (yellow), $\mathcal{B}e(5,5)$ (purple).}
\label{tab:test_BeBe}
\end{figure}

\subsection{Threshold-weighted version}
We deep further the properties of the proposed asymmetric scoring rule by considering a threshold-weighted version and comparing it with the threshold-weighted CRPS.
The goal is to disentangle the different role of the domain-weighting scheme, which reflects the interest of the decision-maker in having good forecasts within a specific interval of values, and of the error-weighting scheme, which corresponds to the decision-maker's loss in case of under or overestimation.

Consider a simulated experiment where $N=100$ observations are drawn from a Normal distribution $\mathcal{N}(1,4)$ and several forecasting densities are approximated using $M=500$ draws.
We consider the domain-weighting schemes in Tab.~\ref{tab:weights}, using $5$ alternative asymmetry levels $c \in\{ 0.05, 0.275, 0.50, 0.725, 0.95 \}$.

In Tab.~\ref{tab:threshold_weight_NormNorm} we find that the asymmetric penalty imposed by \AS \, plays a significant role for all domain-weighting schemes considered.
For an uniform weight, the \AS \, agrees with the CRPS for $c=0.5$, i.e. the symmetric case, but rewards differently the density forecasts for alternative values of the asymmetry level $c$.
When the interest is focused on the right tail of the distribution, both threshold-weighted CRPS and \AS \, agree, but when the attention is on the left tail, the two scoring rules perform remarkably different. The CRPS favours the standard Normal over the $\mathcal{N}(3,1)$, while the \AS \, rewards the latter for all $c \geq 0.275$.

The key insight obtained from this simulated exercise concerns the importance of domain- and error-weighting schemes.
The first assigns an heterogeneous weight to the performance on different intervals, while the  latter asymmetrically rewards negative and positive deviations from the true value.
The threshold-weighted asymmetric scoring rule, t\AS, combines the two schemes and allows to evaluate the performance of the forecasting density from both perspectives.
This is important to the decision makers, who are usually interested in a specific range of all possible values, thus calling for heterogeneous domain-weighting, and have asymmetric preferences towards under or overestimation, which motivates an asymmetric score.

\begin{table}[!th]
\centering
\vspace{-0.01in}
\renewcommand*{\arraystretch}{0.9}
\caption{This table reports the ranking of probabilistic forecasts using $t$CRPS and $t$\AS, for different weights (uniform, center, tails, right and left tail) and asymmetry levels ($c\  \in\{ 0.05, 0.275 0.50, 0.725, 0.95 \}$).
Results from $S=1$ simulation of $N=100$ observations (average score across all observations). Density estimated with $M=500$ draws from forecasting distribution. Target is $\mathcal{N}(1,4)$, forecasting densities are $\mathcal{N}(0,1)$, $\mathcal{N}(-3,1)$, $\mathcal{N}(3,1)$, $\mathcal{N}(0,16)$.}
\vspace*{2ex}
\begin{adjustbox}{width=0.8\textwidth, center=\textwidth}
\begin{tabular}{ l | *{4}{c} } 
\hline \hline
& $\mathcal{N}(0,1)$& $\mathcal{N}(-3,1)$& $\mathcal{N}(3,1)$& $\mathcal{N}(0,16)$\\ 
\hline
tCRPS uniform                & 4 & 2 & 3 & \textbf{1}  \\ 
t\AS$(\cdot,\cdot;0.05)$ uniform     & \textbf{1} & 3 & 4 & 2  \\ 
t\AS$(\cdot,\cdot;0.275)$ uniform    & 2 & \textbf{1} & 4 & 3  \\ 
t\AS$(\cdot,\cdot;0.5)$ uniform      & 4 & 2 & 3 & \textbf{1}  \\ 
t\AS$(\cdot,\cdot;0.725)$ uniform    & 4 & 3 & 2 & \textbf{1}  \\ 
t\AS$(\cdot,\cdot;0.95)$ uniform     & 4 & 3 & \textbf{1} & 2  \\ 
\hline
tCRPS center                 & \textbf{1} & 3 & 4 & 2  \\ 
t\AS$(\cdot,\cdot;0.05)$ center      & 3 & \textbf{1} & 4 & 2  \\ 
t\AS$(\cdot,\cdot;0.275)$ center     & 3 & \textbf{1} & 4 & 2  \\ 
t\AS$(\cdot,\cdot;0.5)$ center       & 4 & \textbf{1} & 3 & 2  \\ 
t\AS$(\cdot,\cdot;0.725)$ center     & 4 & \textbf{1} & 3 & 2  \\ 
t\AS$(\cdot,\cdot;0.95)$ center      & 4 & \textbf{1} & 3 & 2  \\ 
\hline
tCRPS tails                  & \textbf{1} & 3 & 4 & 2  \\ 
t\AS$(\cdot,\cdot;0.05)$ tails       & \textbf{1} & 4 & 2 & 3  \\ 
t\AS$(\cdot,\cdot;0.275)$ tails      & \textbf{1} & 3 & 2 & 4  \\ 
t\AS$(\cdot,\cdot;0.5)$ tails        & 2 & 3 & 4 & \textbf{1}  \\ 
t\AS$(\cdot,\cdot;0.725)$ tails      & 3 & 4 & 2 & \textbf{1}  \\ 
t\AS$(\cdot,\cdot;0.95)$ tails       & 4 & 3 & \textbf{1} & 2  \\ 
\hline
tCRPS right tail             & 2 & 3 & 4 & \textbf{1}  \\ 
t\AS$(\cdot,\cdot;0.05)$ right tail  & 3 & 2 & 4 & \textbf{1}  \\ 
t\AS$(\cdot,\cdot;0.275)$ right tail & 3 & 2 & 4 & \textbf{1}  \\ 
t\AS$(\cdot,\cdot;0.5)$ right tail   & 4 & 2 & 3 & \textbf{1}  \\ 
t\AS$(\cdot,\cdot;0.725)$ right tail & 4 & 3 & 2 & \textbf{1}  \\ 
t\AS$(\cdot,\cdot;0.95)$ right tail  & 4 & 3 & \textbf{1} & 2  \\ 
\hline
tCRPS left tail              & \textbf{1} & 3 & 2 & 4  \\
t\AS$(\cdot,\cdot;0.05)$ left tail   & \textbf{1} & 3 & 4 & 2  \\ 
t\AS$(\cdot,\cdot;0.275)$ left tail  & 2 & 3 & \textbf{1} & 4  \\ 
t\AS$(\cdot,\cdot;0.5)$ left tail    & 4 & 3 & \textbf{1} & 2  \\ 
t\AS$(\cdot,\cdot;0.725)$ left tail  & 4 & 3 & \textbf{1} & 2  \\ 
t\AS$(\cdot,\cdot;0.95)$ left tail   & 4 & 2 & \textbf{1} & 3  \\ 
\hline 
\end{tabular}
\end{adjustbox}
\label{tab:threshold_weight_NormNorm} 
\end{table}

\section{Empirical applications}    \label{sec:application}
In the empirical applications, we adopt a similar framework to \cite{amisano2007comparing} and \cite{gneiting2011comparing}, and consider the task of comparing density forecasts in a time series context. We use a fixed-length rolling window to provide a density forecast for $h$ step ahead future observations. We focus on three different applications related to macroeconomics (employment growth rate) and to commodity prices (oil prices and electricity prices). We compare several univariate models, such as the autoregressive (AR) model, the Markov-switching (MS) AR model and the time-varying parameter (TVP) AR model.

We use the AR(1) as benchmark model, then we specify $12$ lags for the employment growth rate (i.e., $1$ year of monthly observations) and $20$ lags for the oil (i.e., $1$ month of daily observations). Regarding the electricity prices, we include $7$ lags (i.e., $1$  week of daily observations) and by following common practice in the literature, we restrict lags to $t-1$, $t-2$, and $t-7$, which correspond to the previous day, two days before, and one week before the delivery time, recalling first similar conditions that may have characterized the market over the same hours and similar days (such as congestions and blackouts) and secondly the demand level during the days of the week.
For the MS-AR model we consider only $1$ lag, while for the TVP-AR model we use $1$ and $2$ lags.
For both AR and TVP-AR, we consider three specifications of the variance: constant volatility and time-varying volatility in the form of stochastic volatility with Gaussian and Student-t error. For the MS-AR, we impose an identification constraint on the error variance.

In the first application, we aim at forecasting monthly US total nonfarm seasonally adjusted employment growth rate downloaded from the FRED database. We consider the growth rate of the monthly employment rate in US from January 1980 to April 2020. We see evidence of some spikes, in particular with a strong fall in April 2020 due to present COVID-19 situation (see Figure~\ref{fig:data_rraw} in the supplementary material). We use a rolling window approach of $20$ years (thus $240$ observations) and we forecast $h=1$ and $h=12$ (thus $1$ year ahead) month ahead by using a recursive forecasting exercise. 

For oil prices, we analyze daily West Texas Index (WTI) data (no weekends) from 02 January 2012 to 07 May 2020 in order to include in the analysis the recent turmoil. Indeed, large drops in demand that suddenly occurred and storage scarcity have resulted in negative WTI oil prices at the end of April 2020. As for the employment rate, we have used a rolling window of $4$ years and we forecast $h=1$ and $h=5$ days ahead by applying recursive techniques.

In the third application, we consider the problem of forecasting the day-ahead electricity prices in Germany, one of the largest and leading energy market. In the electricity markets, the phenomenon of negative prices -- when allowed to occur, such as in Germany where there is no floor price -- has become more frequent due to the increasing share of electricity generated from renewable energy sources (RES) and the current impossibility to store it (see Figure 2 in the supplementary material). We analyze daily data (with weekends) from 01 January 2014 to 08 May 2020. For the forecasting analysis, we have considered a rolling window of $3$ years and a recursive techniques for predicting $h=1$ and $h=7$ days ahead.

As we discussed in the introduction, policymakers or energy producers may be more concerned with forecasting values below a given threshold than the full distribution, since they require different measures, including in the case of energy variables to stop the production.\footnote{Unfortunately, we have not precise data to compute (i) the value of this threshold, excluding the case of RES producers of electricity prices, that could be still profitable even when prices are marginally above zero, and (ii) the level of asymmetry of the loss function. Therefore, we investigate several values of $c$, the parameter that drives the asymmetry of our measure.} This supports the application of the \AS. For the oil series we perform a case study around the collapse of WTI prices and discuss how the \AS \, results can be applied to identify the true unknown density.

Before evaluating the relative performance of all models, we check the calibration of the density forecasts. Calibration of density forecasts is based on properties of a density and refers to absolute accuracy (see \cite{BCR2019} for further details). The absolute accuracy can be studied by testing forecast accuracy relative to the ``true'', unobserved density.
\cite{Dawid82} introduced the criterion of calibration for comparing prequential probabilities with binary random outcomes and exploited the concept of probability integral transform (PIT), that is the value that a predictive CDF attains at the observations, for continuous random variables. The PITs summarize the properties of the densities and may help us to judge whether the densities are biased in a particular direction and whether the width of the densities has been roughly correct on average, see \cite{DieboldGunterTay1998}. The PITs can provide an indication of whether a density is wrong in predicting higher moments or specific parts of the distribution, such as the tails; however they cannot distinguish among models that are also correctly calibrated. We apply the test of \cite{Knuppel2015} and refer to \cite{RossiSekhposyan2013} for evaluation of PITs in presence of instabilities, \cite{RossiSekhposyan2014} for application with large database and \cite{RossiSekhposyan2016} for a comparison of alternative tests for correct specification of density forecasts.

The PIT tests in Tab.~\ref{tab:Real_CRPS_ACPS} indicate that all densities are correctly calibrated for the employment growth rate at 5\% significance level, excluding the one given by the TVP-AR(2) model at 12-month horizon, for which the p-value is marginally lower at 4.9\%. Density forecasts from models TVP-AR(2)-SV\footnote{Notice that the TVP-AR(2)-SV is always preferred in terms of relative accuracy.} and TVP-AR(2)-tSV are calibrated at 1-day ahead horizon; no density is correctly calibrated at 5-days ahead horizons. All densities are not correctly calibrated when predicting EEX electricity prices at both horizons. So, the PITs analysis suggests there is not a stochastically dominating model, but more specifications can provide (absolute) accurate forecasts suggesting the use of relative metrics such as the \AS \, to discriminate among them. In the case of EEX prices, all models are wrong and a possible explanation is that the models considered in this text are based only on econometric properties of the series, hence they may be labelled as ``purely econometric'' models. \cite{GRR2020a} and \cite{GRR2020b} document how important is to extend these models with economically relevant variables, such as variables related to the demand and the production of electricity, including renewable energy sources, to increase accuracy.
We leave this extension for further research and apply our metrics to an example where models in terms of calibration are all wrong.   

\newcommand{\unastar}{$^{\ast}$}
\newcommand{\duestar}{$^{\ast \ast}$}
\newcommand{\trestar}{$^{\ast \ast \ast}$}
\newcommand{\cg}{\cellcolor[gray]{0.85}}

\begin{table}[!th] 
\centering 
\caption{Ranking of probability forecasts and accuracy test. Best model, over vintages, according to: CRPS; ACPS with $c=0.05; 0.5; 0.95$ for the three different datasets: Employment (top); Oil (middle) and EEX (bottom).} 
\begin{adjustbox}{width=1.3\textwidth, center=\textwidth}
\begin{threeparttable}
\begin{tabular}{l*{13}{c}} 
\toprule 
& & & &  & & & \multicolumn{2}{c}{\textsc{EMPL}} & \\
\midrule 
\multicolumn{1}{l}{\textit{Horizon 1}} & \cg AR(1)& \cg AR(1)-SV& \cg  AR(1)-tSV& \cg AR(12)& \cg AR(12)-SV& \cg AR(12)-tSV& \cg AR(1)-MS& \cg TVP-AR(1)& \cg TVP-AR(1)-SV& \cg TVP-AR(1)-tSV& \cg TVP-AR(2)& \cg TVP-AR(2)-SV& \cg TVP-AR(2)-tSV \\ 
\AS$(\cdot,\cdot;0.05)$ & 
12 & 9\trestar & 5\trestar & 13 & 2\trestar & 1\trestar & 11\trestar & 3\trestar & 10\trestar & 8\trestar & 4\trestar & 7\trestar & 6\trestar \\
\AS$(\cdot,\cdot;0.5)$ & 13 & 10\trestar & 9\trestar & 11 & 8\trestar & 7\trestar & 12\trestar & 4\trestar & 6\trestar & 5\trestar & 1\trestar & 2\trestar & 3\trestar \\
\AS$(\cdot,\cdot;0.95)$   & 13 & 4\trestar & 3\trestar & 12 & 1\trestar & 2\trestar & 11\trestar & 5\trestar & 7\trestar & 9\trestar & 6\trestar & 8\trestar & 10\trestar \\
CRPS & 13 & 10\trestar & 9\trestar & 11 & 8\trestar & 7\trestar & 12\trestar & 4\trestar & 6\trestar & 5\trestar & 1\trestar & 2\trestar & 3\trestar \\
\midrule
\multicolumn{1}{l}{\textit{Horizon 12}}  & \cg AR(1)& \cg AR(1)-SV& \cg AR(1)-tSV& \cg AR(12)& \cg AR(12)-SV& \cg AR(12)-tSV& \cg AR(1)-MS& \cg TVP-AR(1)& \cg TVP-AR(1)-SV& \cg TVP-AR(1)-tSV& TVP-AR(2)& \cg TVP-AR(2)-SV& \cg TVP-AR(2)-tSV\\ 
\AS$(\cdot,\cdot;0.05)$ & 11 & 12 & 10 & 13 & 3 & 2 & 9\trestar & 4 & 6 & 5 & 1 & 8 & 7 \\
\AS$(\cdot,\cdot;0.5)$ & 12 & 4\trestar & 3\trestar & 13 & 1\trestar & 2\trestar & 11\trestar & 5\trestar & 7\trestar & 8\trestar & 6\trestar & 9\trestar & 10\trestar \\
\AS$(\cdot,\cdot;0.95)$   & 12 & 1\trestar & 2\trestar & 13 & 3\trestar & 4\trestar & 11\trestar & 5\trestar & 7\trestar & 8\trestar & 6\trestar & 9\trestar & 10\trestar \\
CRPS & 12 & 4\trestar & 3\trestar & 13 & 1\trestar & 2\trestar & 11\trestar & 5\trestar & 7\trestar & 8\trestar & 6\trestar & 9\trestar & 10\trestar \\
\bottomrule 
\toprule 
& & & &  & & & \multicolumn{2}{c}{\textsc{OIL}} & \\
\midrule 
\multicolumn{1}{l}{\textit{Horizon 1}}  & AR(1)& AR(1)-SV& AR(1)-tSV& AR(20)& AR(20)-SV& AR(20)-tSV& AR(1)-MS& TVP-AR(1)& TVP-AR(1)-SV& TVP-AR(1)-tSV& TVP-AR(2)& \cg TVP-AR(2)-SV& \cg TVP-AR(2)-tSV \\  
\AS$(\cdot,\cdot;0.05)$ & 12 & 8\unastar & 11 & 10\duestar & 7\unastar & 9\duestar & 13 & 6\trestar & 3\trestar & 4\trestar & 5\trestar & 2\trestar & 1\trestar \\
\AS$(\cdot,\cdot;0.5)$ & 12 & 9 & 13 & 7 & 8 & 11 & 10 & 3\trestar & 6\trestar & 5\trestar & 1\trestar & 4\trestar & 2\trestar \\
\AS$(\cdot,\cdot;0.95)$   & 12 & 8 & 11\duestar & 9 & 7 & 10\duestar & 13 & 3\duestar & 6\unastar & 5\unastar & 1\duestar & 4\unastar & 2\unastar \\
CRPS & 12 & 9 & 13 & 7 & 8 & 11 & 10 & 3\trestar & 6\trestar & 5\trestar & 1\trestar & 4\trestar & 2\trestar \\
\midrule
\multicolumn{1}{l}{\textit{Horizon 5}} & AR(1)& AR(1)-SV& AR(1)-tSV& AR(20)& AR(20)-SV& AR(20)-tSV& AR(1)-MS& TVP-AR(1)& TVP-AR(1)-SV& TVP-AR(1)-tSV& TVP-AR(2)& TVP-AR(2)-SV& TVP-AR(2)-tSV\\  
\AS$(\cdot,\cdot;0.05)$ & 10 & 5 & 7 & 9 & 3 & 6 & 12 & 13 & 2 & 8 & 11 & 1\unastar & 4\unastar \\
\AS$(\cdot,\cdot;0.5)$ & 3 & 7 & 6 & 8 & 4 & 5 & 11 & 1\duestar & 13 & 12 & 2 & 10 & 9 \\
\AS$(\cdot,\cdot;0.95)$   & 6 & 1 & 3\unastar  & 8 & 2 & 4 & 13 & 5 & 10 & 7 & 11 & 12 & 9 \\
CRPS & 3 & 6 & 7 & 8 & 4 & 5 & 11 & 1\duestar & 13 & 12 & 2 & 10 & 9 \\
\bottomrule 
\toprule 
& & & &  & & & \multicolumn{2}{c}{\textsc{EEX}} & \\
\midrule 
\multicolumn{1}{l}{\textit{Horizon 1}}  & AR(1)& AR(1)-SV& AR(1)-tSV& AR(20)& AR(20)-SV& AR(20)-tSV& AR(1)-MS& TVP-AR(1)& TVP-AR(1)-SV& TVP-AR(1)-tSV& TVP-AR(2)& TVP-AR(2)-SV& TVP-AR(2)-tSV\\ 
\AS$(\cdot,\cdot;0.05)$ &11 & 8\unastar & 6\trestar & 9\trestar & 3\trestar & 1\trestar & 12 & 13 & 7\duestar & 4\duestar & 10 & 5\duestar & 2\trestar \\
\AS$(\cdot,\cdot;0.5)$ & 12 & 10\trestar & 11\unastar & 7\trestar & 2\trestar & 5\trestar & 13 & 9\trestar & 6\trestar & 4\trestar & 8\trestar & 3\trestar & 1\trestar \\
\AS$(\cdot,\cdot;0.95)$   & 12 & 7\trestar & 8\trestar & 11\trestar & 2\trestar & 6\trestar & 13 & 9\trestar & 3\trestar & 5\trestar & 10\trestar & 1\trestar & 4\trestar \\
CRPS & 13 & 10\trestar & 11\unastar & 7\trestar & 1\trestar & 5\trestar & 12 & 9\trestar & 6\trestar & 4\trestar &  8\trestar & 3\trestar & 2\trestar \\
\midrule
\multicolumn{1}{l}{\textit{Horizon 7}}  & AR(1)& AR(1)-SV& AR(1)-tSV& AR(20)& AR(20)-SV& AR(20)-tSV& AR(1)-MS& TVP-AR(1)& TVP-AR(1)-SV& TVP-AR(1)-tSV& TVP-AR(2)& TVP-AR(2)-SV& TVP-AR(2)-tSV\\  
\AS$(\cdot,\cdot;0.05)$ & 5 & 13 & 12 & 1\trestar & 11 & 4 & 7 & 3 & 9 & 8 & 2 & 10 & 6 \\
\AS$(\cdot,\cdot;0.5)$ & 10 & 12 & 13 & 9\trestar & 8\trestar & 7\trestar & 11 & 6\trestar & 4\trestar & 2\trestar & 5\trestar & 3\trestar & 1\trestar \\
\AS$(\cdot,\cdot;0.95)$   & 12 & 11 & 10 & 9\trestar & 7\unastar & 5\duestar & 13 & 1\trestar & 8\unastar & 4\duestar & 2\trestar & 6\unastar & 3\duestar \\
CRPS & 10 & 12 & 13 & 9\trestar & 8\trestar & 7\trestar & 11 & 6\trestar & 4\trestar & 2\trestar & 5\trestar & 3\trestar & 1\trestar \\
\bottomrule 
\end{tabular}
\small{
\begin{tablenotes}
\item \textit{Notes:}
\item[1] $^{\ast \ast \ast}$, $^{\ast \ast}$ and $^{\ast}$ indicate scores are significantly different from 1 at $1\%$, $5\%$ and $10\%$, according to the \AS test in Section \ref{sec:test}.
\item[2] Gray cells indicate models that are correctly
calibrated at $5\%$ significance level according to the Knuppel test. 
\end{tablenotes}
}
\end{threeparttable}
\end{adjustbox}
\label{tab:Real_CRPS_ACPS}
\end{table} 

Tab.~\ref{tab:Real_CRPS_ACPS} shows the ranking of the probability forecasts over vintages and across models for all the three datasets for $c=0.05,\,0.5,\,0.95$.\footnote{See Table IV in the supplementary material for results for a higher range of $c$.} The \AS test presented in Section \ref{sec:test} is also reported.\footnote{In order to perform the test, we checked the stationarity and short memory of the loss differential series using the ADF test and the autocorrelation function, respectively.}

Regarding the employment growth rate, we can see at horizon 1-month ahead that the best model for $c=0.05$ is the AR(12)-tSV, for $c=0.5$ it is the TVP-AR with $2$ lags (the same for the CRPS measure), and for $c=0.95$ it is the AR(12)-SV, showing differences across different levels of asymmetry. The test indicates that most of the models provide superior forecasts than the AR(1) benchmark and only the AR(12) model does not provide gains. The difference in model performance for various levels of $c$ is confirmed for $h=12$ and interesting for $c=0.05$ only the AR(1)-MS is statistically superior. Therefore, our evidence supports the large literature on the use of time-varying and nonlinear models in modelling and forecasting (un)employment data. Moreover, the best model for $h=12$ and $c=0.5$ is the same when applying the CRPS.  In Fig.~\ref{fig:EMPL_ACPS}, we report the best model in each vintage for the two horizon ahead, where the black line refers to the CRPS, the red and the yellow for the \AS \, for $c=0.05$ and for $c=0.95$, respectively. The graph shows large instability in the best model, in particular when using the CRPS. The \AS \, rules seem to prefer one of the alternative models for more consecutive vintages. For example, by looking at the relative frequency of occurrence of each model as the best model, we find that for $c=0.05$, $31\%$ times the AR(12)-tSV is considered the best model for $h=1$. Similar percentages are found for other levels of $c$ and $h$, despite model order varies substantially across measures.

\begin{figure}[!th]
\centering
\setlength{\tabcolsep}{0.2pt}
\setlength{\abovecaptionskip}{-3pt}
\hspace*{-12ex}
\begin{tabular}{cc}
\begin{rotate}{90} {\footnotesize \hspace*{45pt} $h=1$} \end{rotate} &
\includegraphics[trim= 25mm 5mm 40mm 5mm,clip,height= 4.0cm, width= 17.5cm]{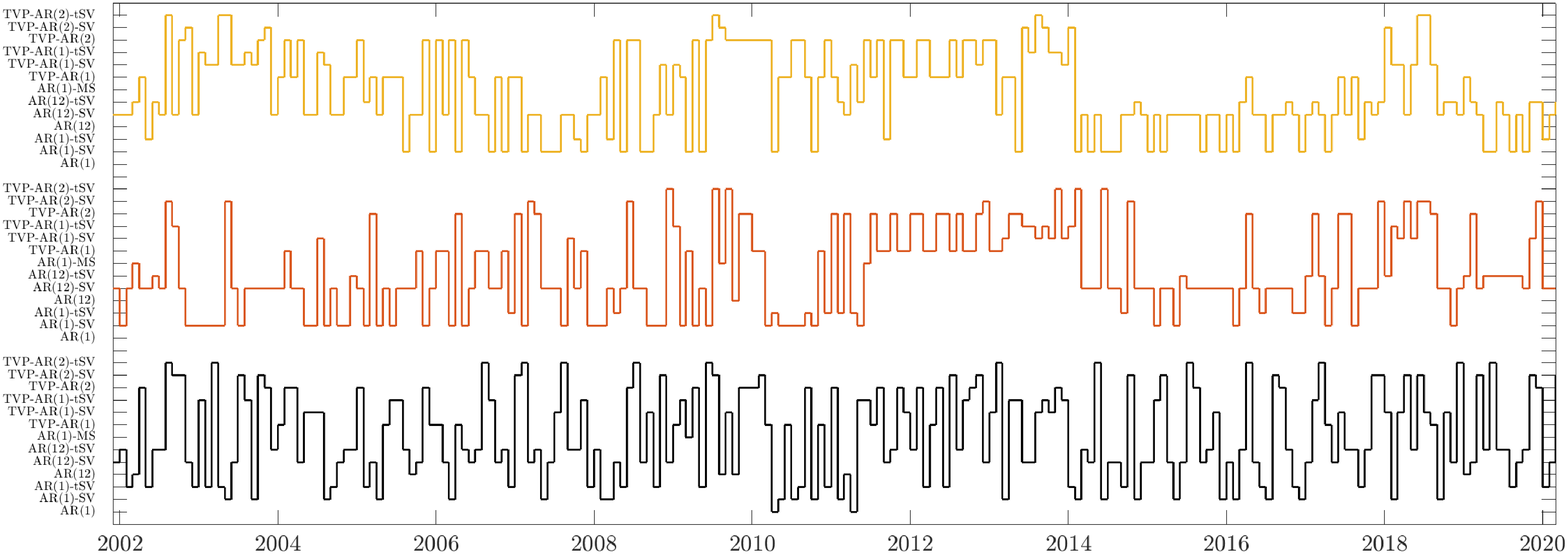} \\[-10pt]
\begin{rotate}{90} {\footnotesize \hspace*{45pt} $h=12$} \end{rotate} &
\includegraphics[trim= 25mm 5mm 40mm 5mm,clip,height= 4.0cm, width= 17.5cm]{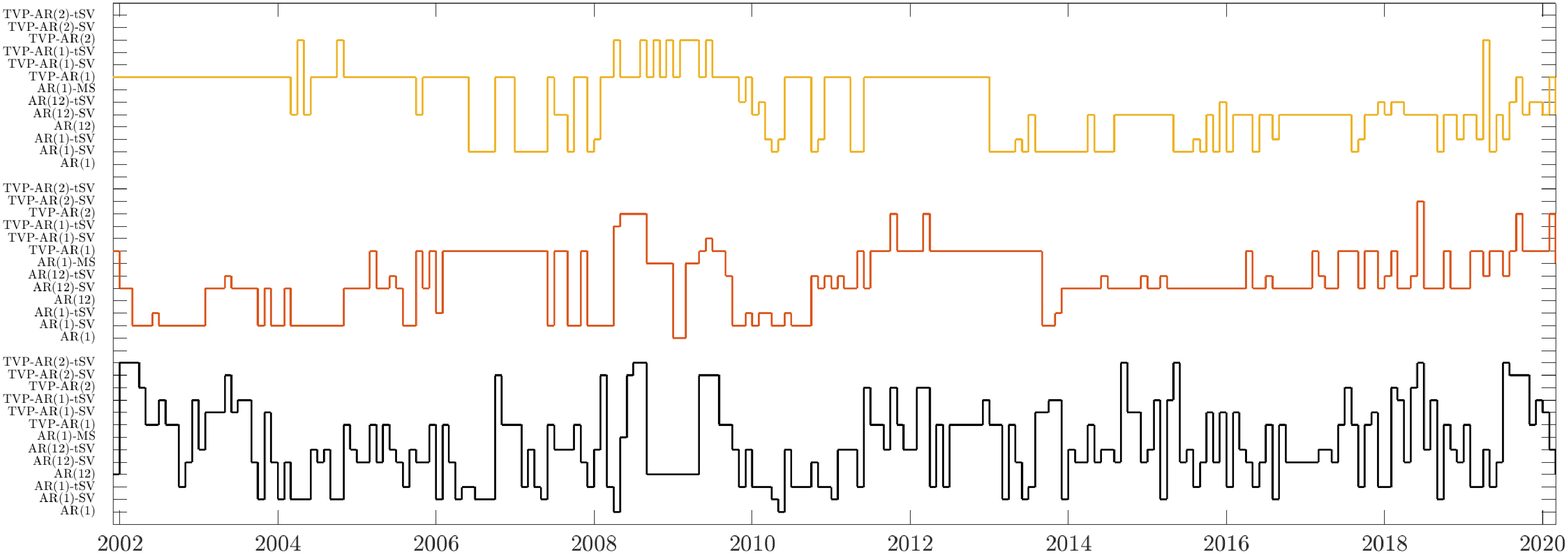}
\end{tabular}
\caption{Best model in each vintage for EMPL dataset: CRPS (black), \AS \, with $c=0.05$ (red), \AS \, with $c=0.95$ (yellow).}
\label{fig:EMPL_ACPS}
\end{figure}

Moving to the oil prices, in the middle panel of Tab.~\ref{tab:Real_CRPS_ACPS}, we find that across vintages, for $1$ day ahead the TVP-AR(2) is the best model whereas the TVP-AR(2)-SV is the second best for the asymmetric levels $c=0.5,\,0.95$ and the CRPS. For $c=0.05$ the best model is the TVP-AR(2)-SV model, supporting PITS evidence that this model is among the few ones correctly calibrated. The TVP-AR(2)-SV model is again the best model for $1$ week ahead of forecasting and for $c=0.05$ and it is one of the two models to be statistically superior to the AR benchmark. For the same weekly horizon and other levels of $c$, again only few models are superior to the benchmark. Fig.~\ref{fig:OIL_ACPS} confirms that the \AS \ is less variable in this selection than the CRPS.

\begin{figure}[!th]
\centering
\setlength{\tabcolsep}{0.2pt}
\setlength{\abovecaptionskip}{-3pt}
\hspace*{-12ex}
\begin{tabular}{cc}
\begin{rotate}{90} {\footnotesize \hspace*{45pt} $h=1$} \end{rotate} &
\includegraphics[trim= 25mm 5mm 40mm 5mm,clip,height= 4.0cm, width= 17.5cm]{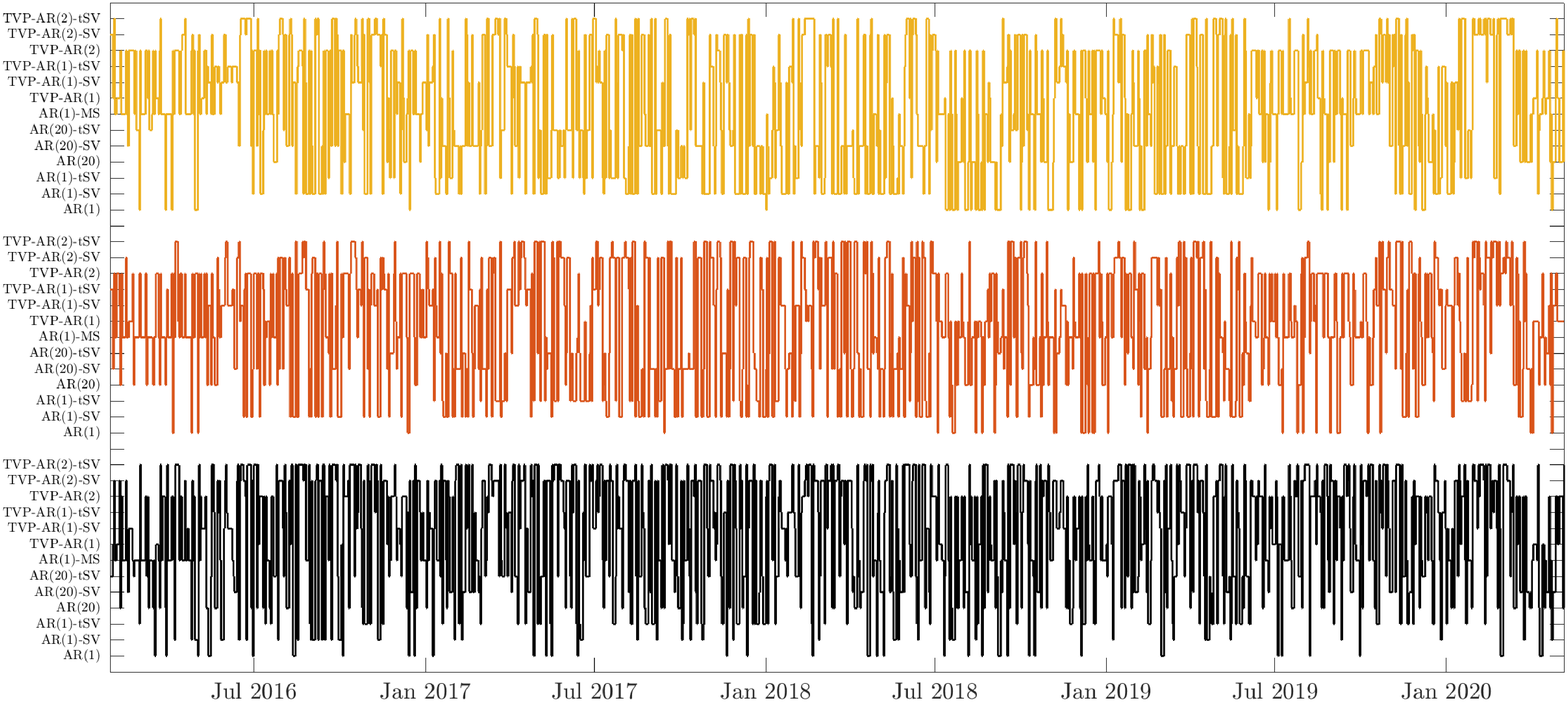} \\[-10pt]
\begin{rotate}{90} {\footnotesize \hspace*{45pt} $h=5$} \end{rotate} &
\includegraphics[trim= 25mm 5mm 40mm 5mm,clip,height= 4.0cm, width= 17.5cm]{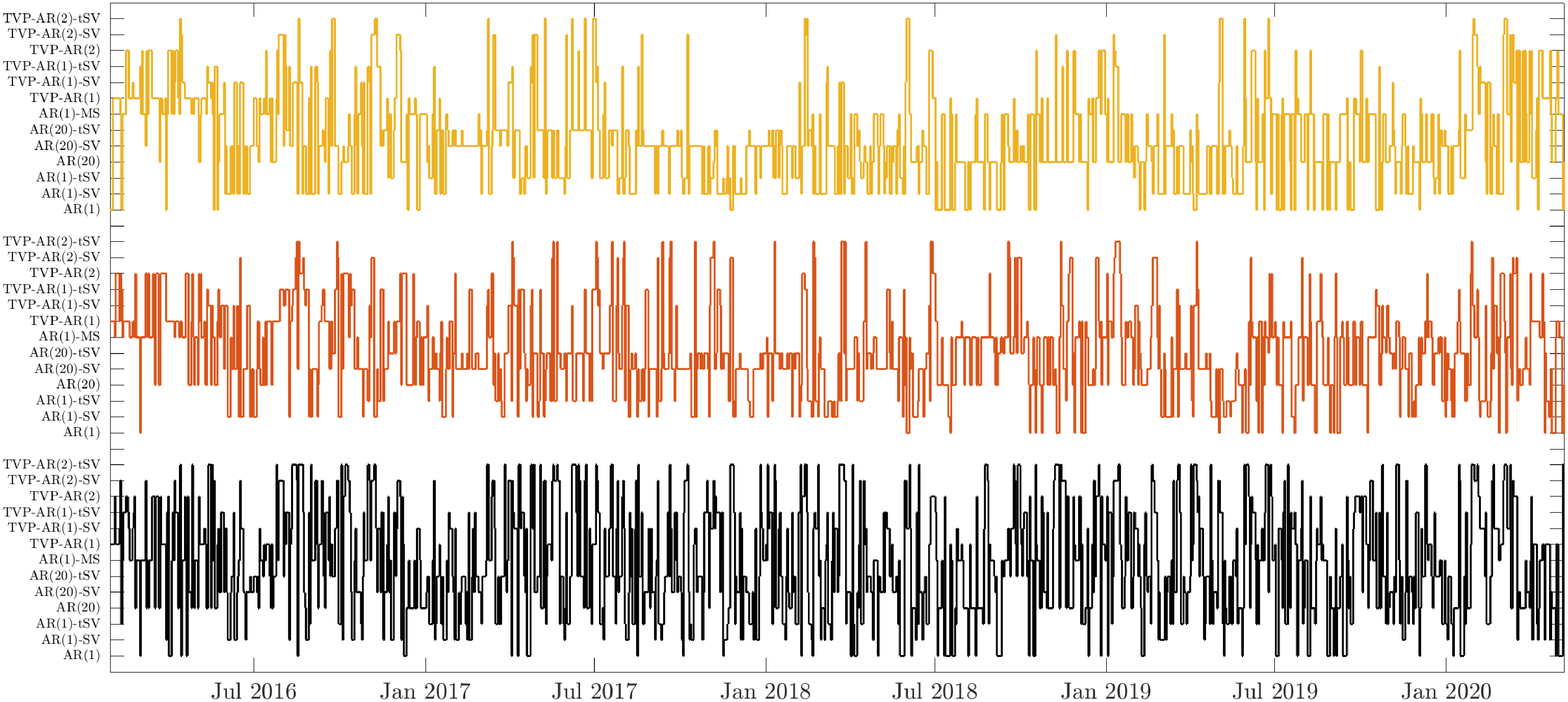}
\end{tabular}
\caption{Best model in each vintage for OIL dataset: CRPS (black), \AS \, with $c=0.05$ (red), \AS \, with $c=0.95$ (yellow).}
\label{fig:OIL_ACPS}
\end{figure}

Fig.~\ref{fig:example_score_OIL_AR20} illustrates the \AS \ for one step ahead density forecasts of the OIL prices, according to an AR(20) model, for each vintage of the rolling estimation and various levels of asymmetry.
Despite showing the results for a single model, this figure presents some interesting insights.
By looking at the scores between April 17 and April 21, we find that the forecast is worst performing for $c=0.05$ and best for $c=0.95$, indicating that the density forecast assigns more mass on the right part of the support as compared to the density of the observations. This situation is similar to the yellow line in Fig.~\ref{fig:score_varyingP}.
Surprisingly, the ranking is reversed between April 21 and April 24, where the forecast receives a higher score under $c=0.05$. This suggests that the density forecast is likely to be a right-shifted version of the observation density, similarly to the blue line in Fig.~\ref{fig:score_varyingP}.

\begin{figure}[!th]
\centering
\setlength{\abovecaptionskip}{0pt}
\hspace*{-6.5ex}
\begin{tabular}{ c c } 
\includegraphics[trim= 20mm 0mm 20mm 5mm,clip,height= 3.5cm, width= 7.5cm]{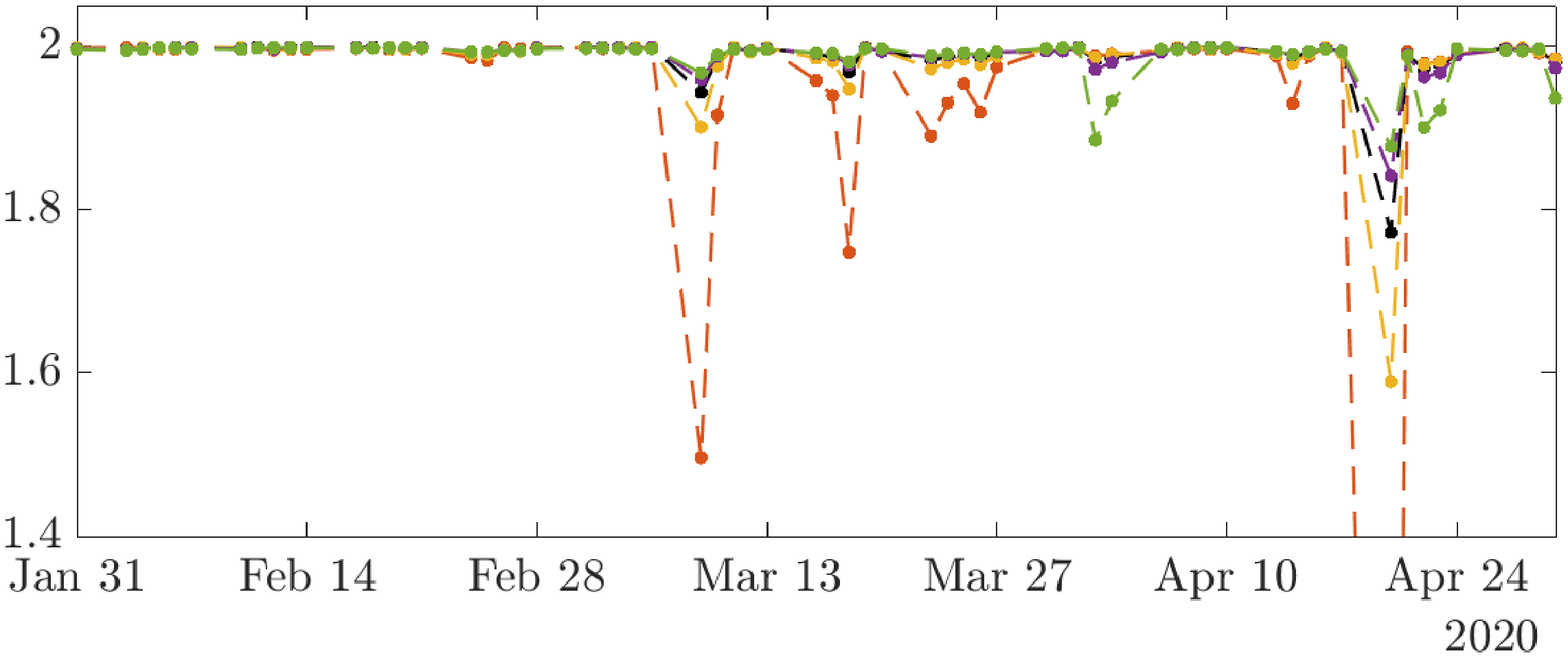} & 
\includegraphics[trim= 20mm 0mm 15mm 5mm,clip,height= 3.5cm, width= 7.5cm]{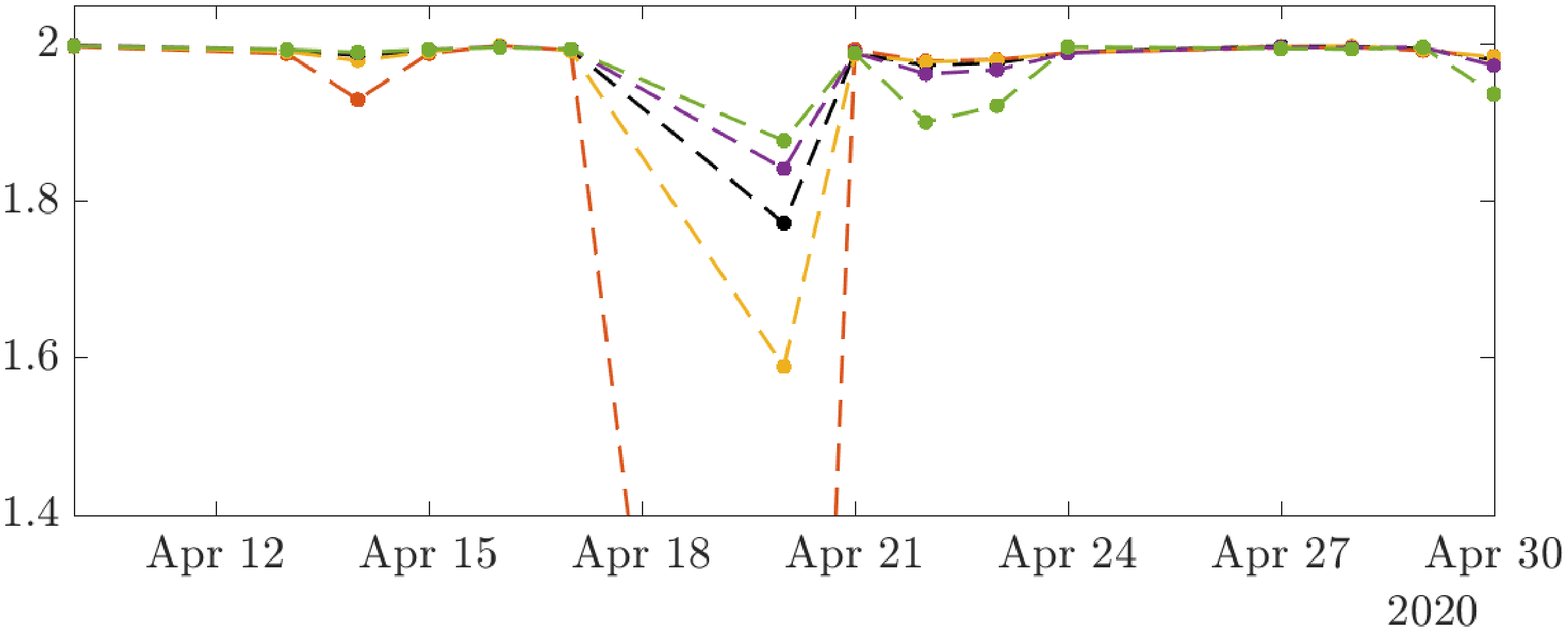} \\[-7pt]
\includegraphics[trim= 20mm 0mm 20mm 5mm,clip,height= 3.5cm, width= 7.5cm]{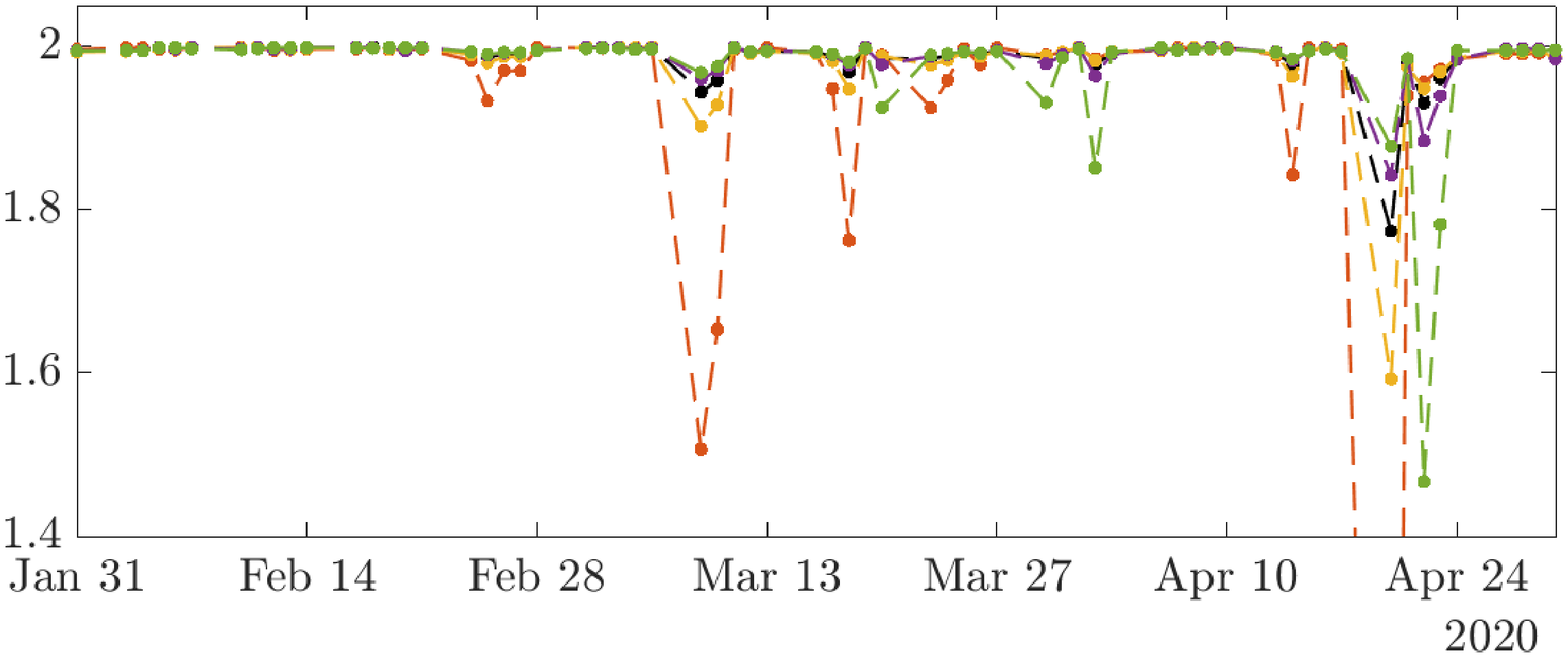} & 
\includegraphics[trim= 20mm 0mm 15mm 5mm,clip,height= 3.5cm, width= 7.5cm]{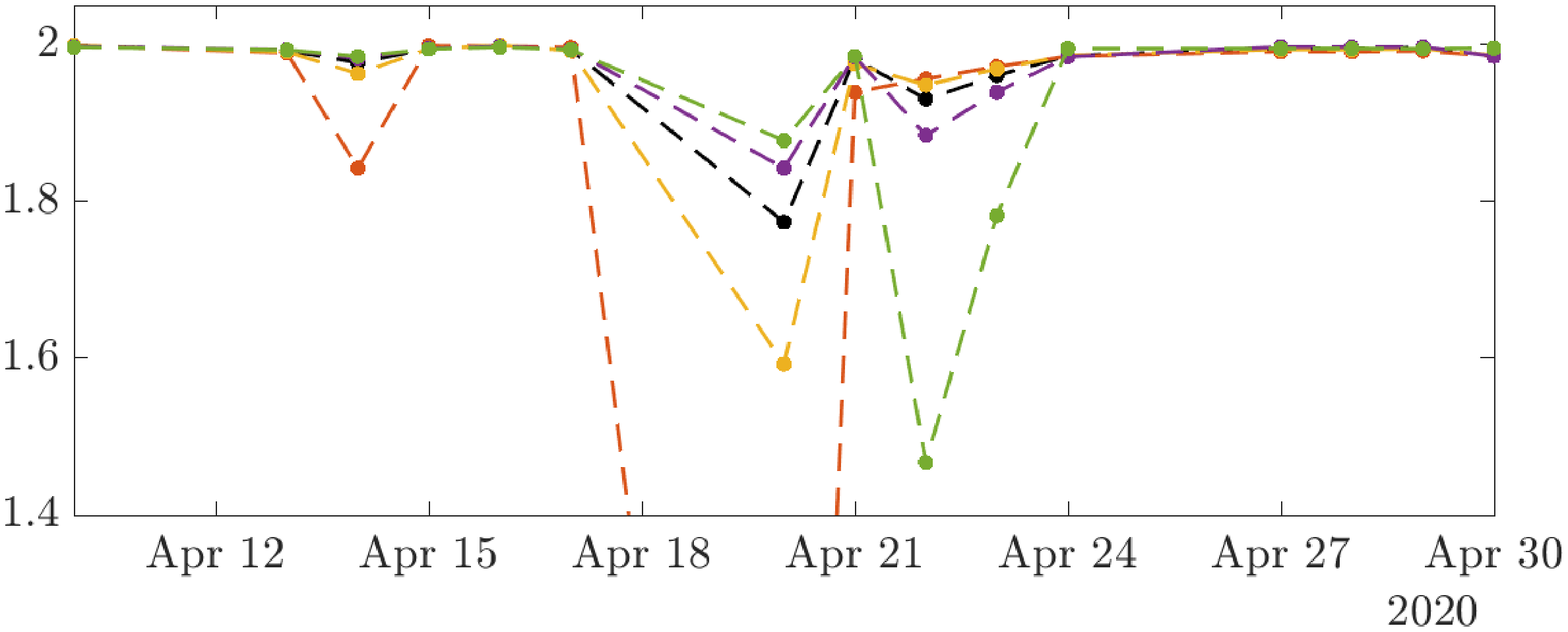} \\[-7pt]
\includegraphics[trim= 20mm 2mm 20mm 5mm,clip,height= 3.5cm, width= 7.5cm]{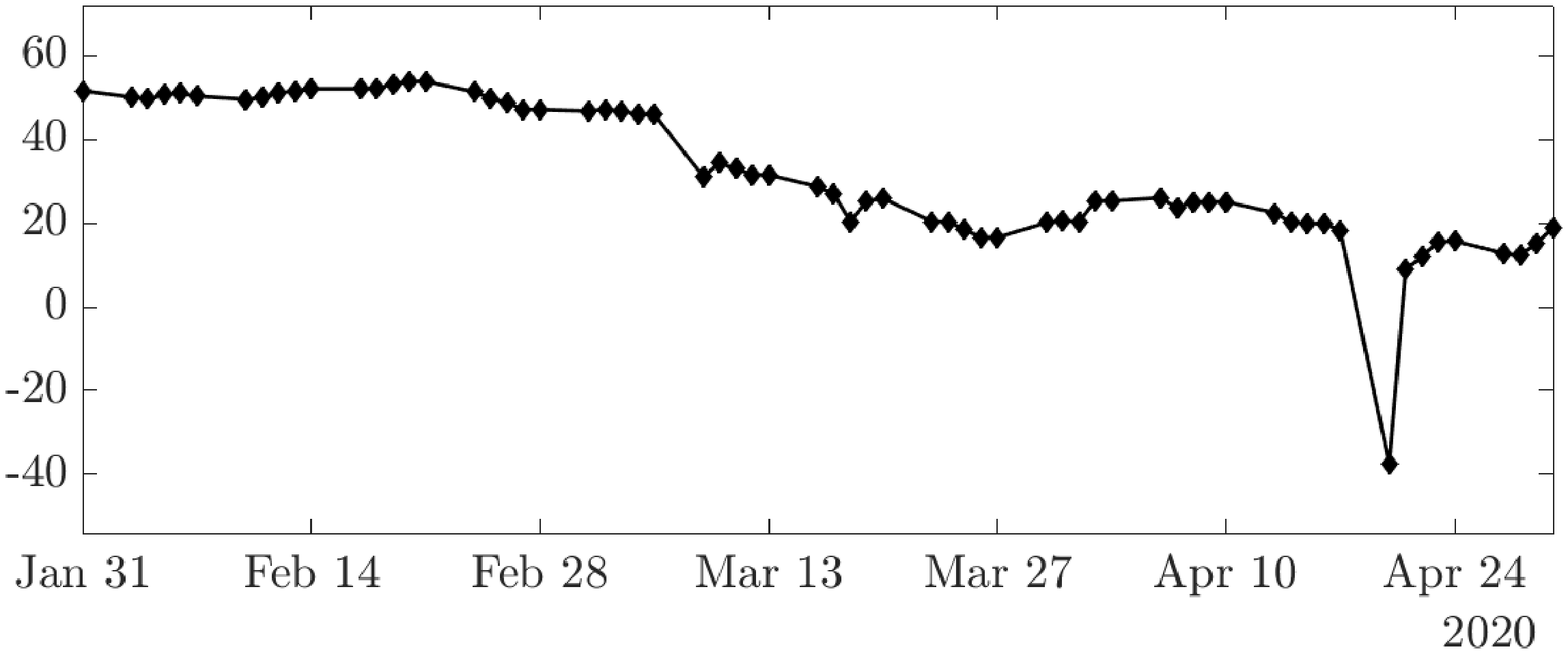} & 
\includegraphics[trim= 20mm 2mm 15mm 5mm,clip,height= 3.5cm, width= 7.5cm]{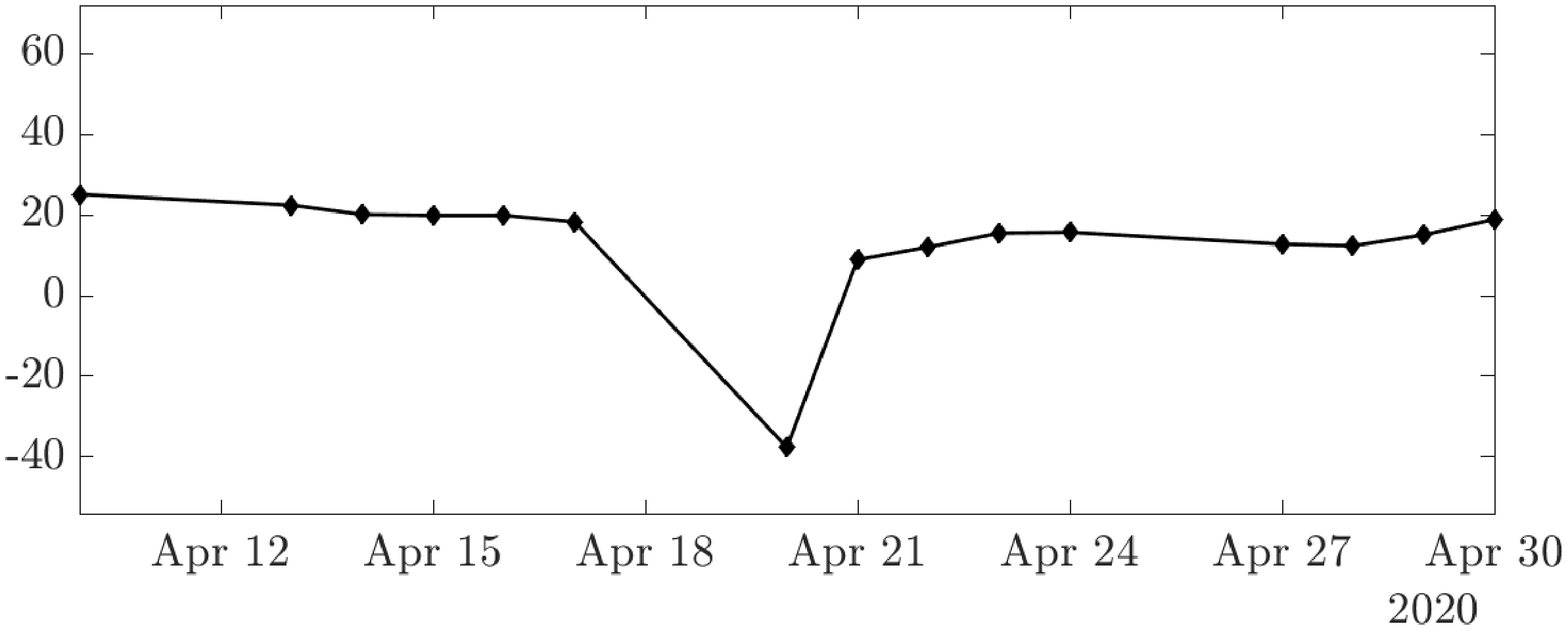}
\end{tabular}
\caption{Top two rows: values of \AS \, for one step ahead density forecasts of the OIL prices according to a TVP-AR(2) model and an AR(20) model, respectively, for selected rolling windows (x-axis) and different asymmetry levels $c$: 0.05 (dashed red line), 0.275 (dashed yellow line), 0.50 (dashed black line), 0.725 (dashed purple line), 0.95 (dashed green line).
Bottom row: observed values of the time series (solid black line).
The right column is a zoomed-in version of the left column.
}
\label{fig:example_score_OIL_AR20}
\end{figure}

These results highlight how accounting for asymmetry in forecast evaluation may lead to dramatically different implications. By looking at the period until April 21, a decision maker averse to overestimation of oil price is likely to discard the AR(20) in favor of alternative models for making forecasts. Conversely, another agent facing the same decision problem, equipped with the same data and models, but averse to underestimation, is likely to agree to the AR(20).

Moreover, these insights provide an important value added of the \AS \, as compared to symmetric scores. By looking at variation of the ranking according to the \AS \, over time, it is possible to infer the relative dynamics of the forecasting and observation densities. In the case previously mentioned, between April 17 and April 21 the forecast tends to overestimate (i.e., its CDF is to the right of the observations CDF), while it tends to underestimate between April 21 and April 24 (i.e., its CDF is to the left of the observations CDF).
Under a symmetric score it is not possible to grasp these insights since negative and positive deviations from the target are equally penalized.

The bottom panel of Tab.~\ref{tab:Real_CRPS_ACPS} reports the results for the electricity prices. As in the previous cases, there is large uncertainty on the model ranking. In line with PIT evidence, the high volatility, spikes and negative prices of the electricity prices drive different results depending on the level of asymmetry of the user. At $h=1$ and $c=0.05$, the AR(20)-tSV is the best model, for higher values of $c$, the TVP-AR(2)-SV and TVP-AR(2)-tSV are the preferred ones. Many models with time-varying volatility outperform the constant volatility models, confirming evidence in \cite{GRR2020b}. At $h=12$ the AR(20) for $c=0.05$, the TVP-AR(2)-tSV for $c=0.5$, and the TVP-AR(1) for $c =0.95$ give the highest \AS. Fig. \ref{fig:EEX_ACPS} again indicates more stable performance of some models when accounting for asymmetry relative to use the symmetric CRPS.

\begin{figure}[!th]
\centering
\setlength{\tabcolsep}{0.2pt}
\setlength{\abovecaptionskip}{-3pt}
\hspace*{-12ex}
\begin{tabular}{cc}
\begin{rotate}{90} {\footnotesize \hspace*{45pt} $h=1$} \end{rotate} &
\includegraphics[trim= 25mm 5mm 40mm 5mm,clip,height= 4.0cm, width= 17.5cm]{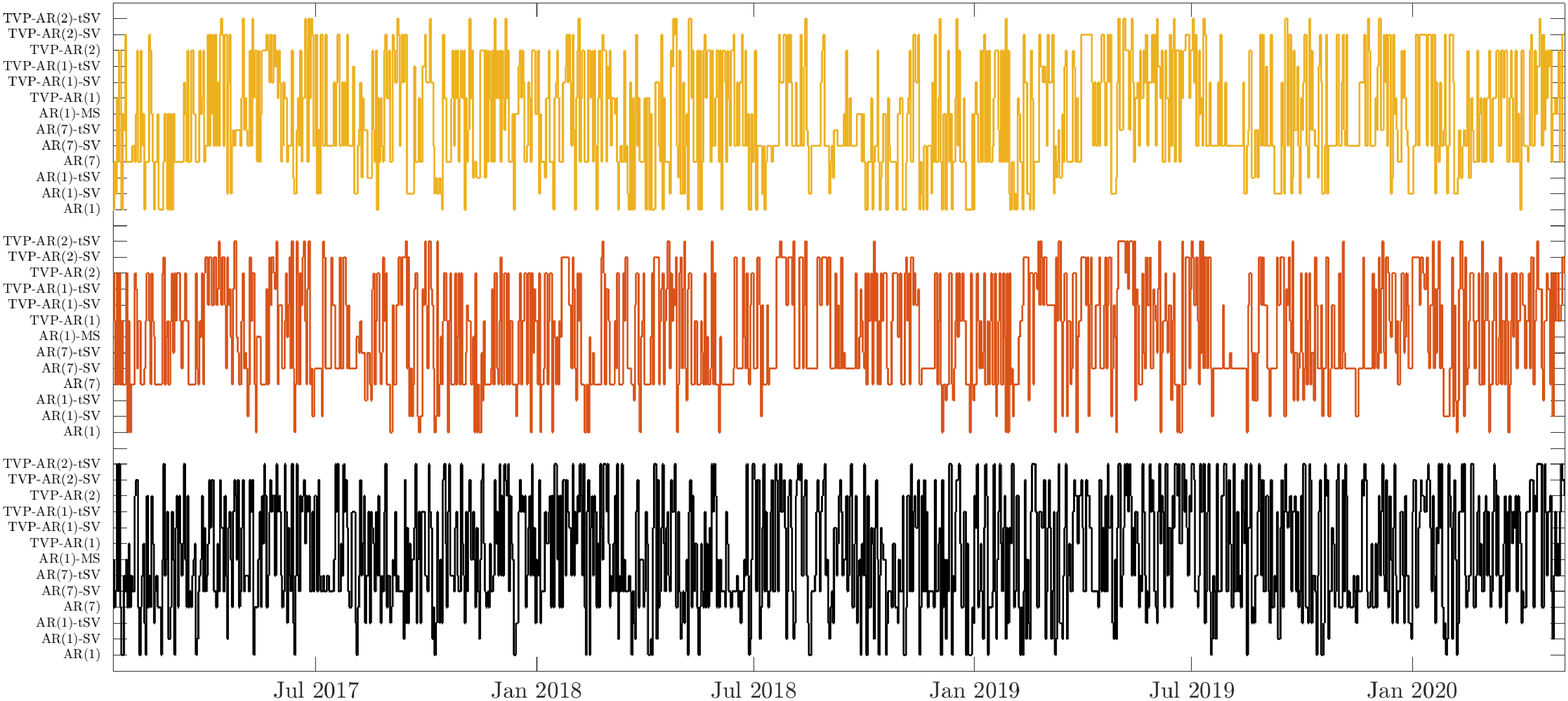} \\[-10pt]
\begin{rotate}{90} {\footnotesize \hspace*{45pt} $h=7$} \end{rotate} &
\includegraphics[trim= 25mm 5mm 40mm 5mm,clip,height= 4.0cm, width= 17.5cm]{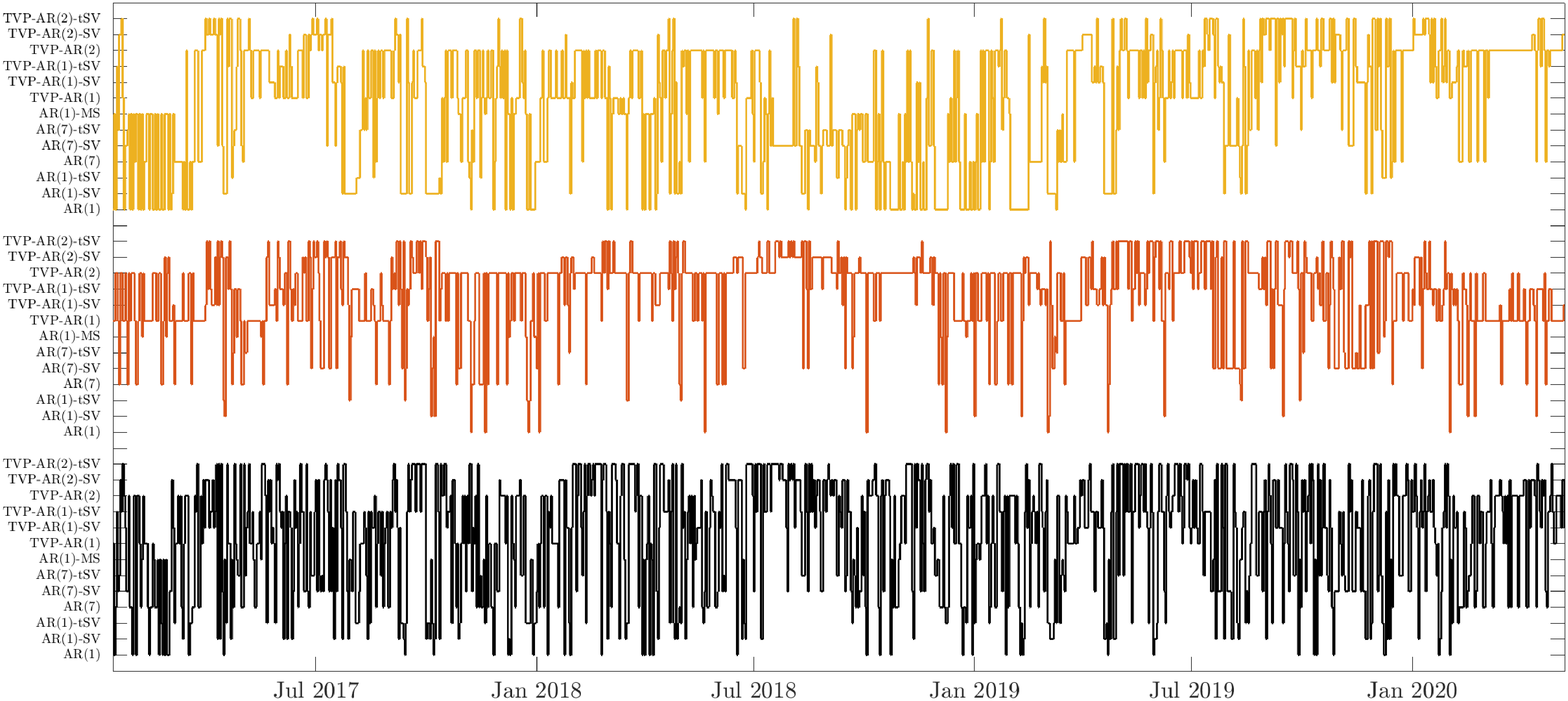}
\end{tabular}
\caption{Best model in each vintage for EEX dataset: CRPS (black), \AS \, with $c=0.05$ (red), \AS \, with $c=0.95$ (yellow).}
\label{fig:EEX_ACPS}
\end{figure}

\section{Conclusions}    \label{sec:conclusions}
This paper has introduced a novel asymmetric proper score for probabilistic forecasts of continuous variables, the \AS. Its main application is the evaluation and comparison of density forecasts.
In addition, we have proposed a threshold- and quantile-weighted version of the asymmetric score, which, by reweighing the domain, allows for a further level of asymmetry in the evaluation of forecasts. We also derive a test to compare the statistical accuracy of different forecasts. The definition of \AS \, is sufficiently flexible to be used in a variety of univariate contexts, and carries over to the multivariate case. The latter deserves further investigation and is an open field for future research.

We provide a tool able to account for the decision maker's preferences in the evaluation of density forecasts both in terms of domain- and error-weighting schemes. 

In an artificial data exercise, we have shown the good performance of our proposed asymmetric score for different continuous target distributions. In relevant macroeconomic and energy applications, we evaluate our score across different models and for different horizons, and we improve on the quality of the forecasts by providing an effective tool for density forecast  evaluation.

The proposed score, \AS, is of general use in any situation where the decision maker has asymmetric preferences in the evaluation of forecasts and thus it can be applied to a much wide range of applications. Further extensions could cover the area of forecast instability, see \cite{GiacominiRossi2010}, and the case of state-dependent function of economic variables such as in \cite{ORS2020}.
%

\bibliographystyle{chicago} 
\bibliography{Asymmetry_biblio.bib}

\clearpage

\renewcommand{\thesection}{S.\arabic{section}}
\renewcommand{\theequation}{S.\arabic{equation}}
\renewcommand{\thefigure}{S.\arabic{figure}}
\renewcommand{\thetable}{S.\arabic{table}}
\setcounter{section}{0}
\setcounter{table}{0}
\setcounter{figure}{0}
\setcounter{equation}{0}
\setcounter{page}{1}

\section*{Supplementary Material to ``Proper scoring rules for evaluating asymmetry in density forecasting''}

\begin{abstract}
This appendix contains the details of the model specifications used in the empirical applications in Section~\ref{S_sec:apdx_models}.
The data used in the empirical application is described in Section~\ref{S_sec:apdx_data}, while Section~\ref{S_sec:apdx_application} provides additional details of the empirical analysis.
\end{abstract}


\section{Models}   \label{S_sec:apdx_models}
Let $i \in \{ EMPL, OIL, EEX \}$.

\subsection*{AR(1)}
As a benchmark, we consider a standard autoregressive model of order 1, AR(1), defined as
\begin{equation}
y_{i,t} = \alpha_i + \beta_i y_{i,t-1} + \epsilon_{i,t}, \qquad \epsilon_{i,t} \sim \mathcal{N}(0,\sigma_i^2),
\label{eq:AR1}
\end{equation}
where $t=1,\ldots,T$. As for the other models, we also consider different lag specification.

\subsection*{Markov Switching AR(1)}
The assumption of constant parameters in time series analysis is likely to be incorrect, especially for long time series. Motivated by this fact, we estimate a Markov switching AR(1) model, called MS-AR(1), where the autoregressive parameter jumps according to a latent Markov chain
\begin{equation}
y_{i,t} = \alpha_{i,S_t} + \beta_{i,S_{i,t}} y_{i,t-1} + \epsilon_{i,t}, \qquad \epsilon_{i,t} \sim \mathcal{N}(0,\sigma_{i,S_t}^2)
\label{eq:MS_AR1}
\end{equation}
where each $S_{i,t}$ follows a homogeneous Markov chain with transition matrix
\begin{equation*}
\Xi_i = \begin{bmatrix}
\xi_{i,1,1} & \xi_{i,1,2} \\
\xi_{i,2,1} & \xi_{i,2,2} 
\end{bmatrix}
\end{equation*}
with $\xi_{i,k,j} = P(S_{i,t} = j | S_{i,t} = k)$.

\subsubsection*{Bayesian inference}
We sample the path of $S_{i,t}$ using the forward filter, backward sampler of \cite{fruhwirth2006finite}.

Fix $i$ and define $\boldsymbol{\beta}_{S_t} = (\alpha_{i}, \beta_{i,S_t})'$ and $X_t = (1,y_{t-1})$. Finally, for each state $m=1,\ldots,M$ let $\mathcal{T}_m = \{ t=1,\ldots,T : s_t  = m \}$.
Assume a Gaussian prior for $\boldsymbol{\beta}_m$, for each state $m=1,\ldots,M$, as
\begin{align*}
P(\boldsymbol{\beta}_m) = \mathcal{N}(\underline{\mu}_\beta, \underline{V}_\beta).
\end{align*}
Then, the posterior is obtained as
\begin{align*}
\boldsymbol{\beta}_m | \mathbf{y}, \mathbf{s}_t, \sigma^2 \sim  \mathcal{N}(\overline{\mu}_\beta, \overline{V}_\beta),
\end{align*}
where
\begin{align*}
\overline{V}_\beta = \Big( \underline{V}_\beta^{-1} + \sum_{t\in \mathcal{T}_m} \frac{X_t' X_t}{\sigma_m^2} \Big)^{-1},   \qquad
\overline{\mu}_\beta = \overline{V}_\beta \Big( \underline{V}_\beta^{-1} \underline{\mu}_\beta + \sum_{t\in \mathcal{T}_m} \frac{X_t' y_t}{\sigma_m^2} \Big).
\end{align*}

Assume a Dirichlet prior for $\boldsymbol{\xi}_{m} = (\xi_{m,1},\ldots,\xi_{m,m})$, for each state $m=1,\ldots,M$, as
\begin{align*}
P(\boldsymbol{\xi}_m) = \mathcal{D}ir(\underline{\mathbf{c}}_m).
\end{align*}
Then, the posterior is obtained as
\begin{align*}
\boldsymbol{\xi}_m | \mathbf{s}_t \sim \mathcal{D}ir(\underline{\mathbf{c}}_m + \mathbf{N}_m),
\end{align*}
where
\begin{align*}
N_{m,l} = \#\{ t : S_{t-1} = m \wedge S_t = l \}.
\end{align*}
For identifying the states, we impose a restriction on the levels of the volatility, i.e. $\sigma_{i,1}^2 < \sigma_{i,2}^2$. Therefore, we label state 1 the low volatility regime and state 2 the high volatility regime.

\subsection*{TVP-AR(1)}
In order to capture potential smooth variations of the coefficient, we consider a TVP framework, denoted TVP-AR(1), where the latter parameter is assumed to evolve according to a latent AR(1) process
\begin{alignat}{3}
& y_{i,t} && = \alpha_{i,t} + \beta_{i,t} y_{i,t-1} + \epsilon_{i,t}, && \qquad \epsilon_{i,t} \sim \mathcal{N}(0,\sigma_i^2) \\
& \boldsymbol{\beta}_{i,t} && = A_i \boldsymbol{\beta}_{i,t-1} + \boldsymbol{\eta}_{i,t}, && \qquad \boldsymbol{\eta}_{i,t} \sim \mathcal{N}(0,\Omega_i)
\label{eq:TVP_AR1}
\end{alignat}
We sample the path of $\beta_{i,t}$ using the forward filter, backward sampler of \cite{Carter1994Gibbs_StateSpace}.

We assume a Gaussian prior for each column $j$ of $A_i$, that is
\begin{equation*}
P(A_{i,:j}) = \mathcal{N}(\underline{\boldsymbol{\mu}}_A, \underline{V}_A),
\end{equation*}
thus obtaining a Gaussian posterior
\begin{equation*}
A_{i,:j} | \{ \boldsymbol{\beta}_{i,1},\ldots,\boldsymbol{\beta}_{i,T} \}, \Omega_i \sim \mathcal{N}(\overline{\boldsymbol{\mu}}_A, \overline{V}_A),
\end{equation*}
where
\begin{align*}
\hspace*{-6ex}
\overline{V}_A = \Big( \underline{V}_A^{-1} + \sum_{t=2}^T \boldsymbol{\beta}_{i,t-1}' \Omega_i^{-1} \boldsymbol{\beta}_{i,t-1} \Big)^{-1},   \qquad
\overline{\boldsymbol{\mu}}_A = \overline{V}_A \Big( \underline{V}_A^{-1} \underline{\boldsymbol{\mu}}_A + \sum_{t=2}^T \boldsymbol{\beta}_{i,t-1}' \Omega_i^{-1} \boldsymbol{\beta}_{i,t} \Big).
\end{align*}


\subsection*{Stochastic Volatility}
We also study the previous autoregressive models with the inclusion of stochastic volatility. We present only the extension for the AR(1) in eq.~\eqref{eq:AR1}, and analogous changes apply to model \eqref{eq:TVP_AR1}.

Let $h_{i,t}$, for $t=1,\ldots,T$, be the log-variance for series $i$ at time $t$, then an AR(1) model with Gaussian noise and stochastic volatility, denoted AR(1)-SV, is given by
\begin{alignat}{3}
\notag
y_{i,t} & = \alpha_i + \beta_i y_{i,t-1} + \epsilon_{i,t},  \qquad & \epsilon_{i,t} & \sim \mathcal{N}(0,\exp(h_{i,t})) \\
\label{eq:AR_SV}
h_{i,t} & = h_{i,t-1} + u_{i,t}, \qquad & u_{i,t} & \sim \mathcal{N}(0,\sigma_{i,h}^2).
\end{alignat}

The last model considered is an extension of the AR(1) with stochastic volatility that assumes Student-$t$ noise. Let $\nu_i > 0$ be the degrees of freedom parameter. An AR(1) with Student-$t$ noise and stochastic volatility, denoted AR(1)-SV-$t$, is given by
\begin{alignat}{3}
\notag
y_{i,t} & = \alpha_i + \beta_i y_{i,t-1} + \epsilon_{i,t},  \qquad & \epsilon_{i,t} & \sim \mathcal{N}(0,\exp(h_{i,t})/\lambda_{i,t}) \\
\label{eq:model_ARX_SV_t}
h_{i,t} & = h_{i,t-1} + u_{i,t}, \qquad & u_{i,t} & \sim \mathcal{N}(0,\sigma_{i,h}^2),
\end{alignat}
where $\lambda_{i,t} \sim \mathcal{IG}(\nu_i/2,\nu_i/2)$.

\clearpage
\section{Data description}   \label{S_sec:apdx_data}

\begin{figure}[H]
\centering
\caption{\bf Raw data series.}
\vspace{-0.25in}
\begin{justify} 
\centering
\footnotesize{This figure reports the raw time series for EMPL, OIL, and EEX .}
\end{justify}
\hspace{-.6em}
\begin{adjustbox}{width=1.1\textwidth,center=\textwidth}
\begin{tabular}{ccc}
\begin{rotate}{90} \scriptsize \hspace*{28pt} growth rate EMPL \end{rotate} & \hspace*{-12pt}
\includegraphics[trim= 12mm 0mm 24mm 0mm,clip,height= 4.0cm, width= 10.7cm]{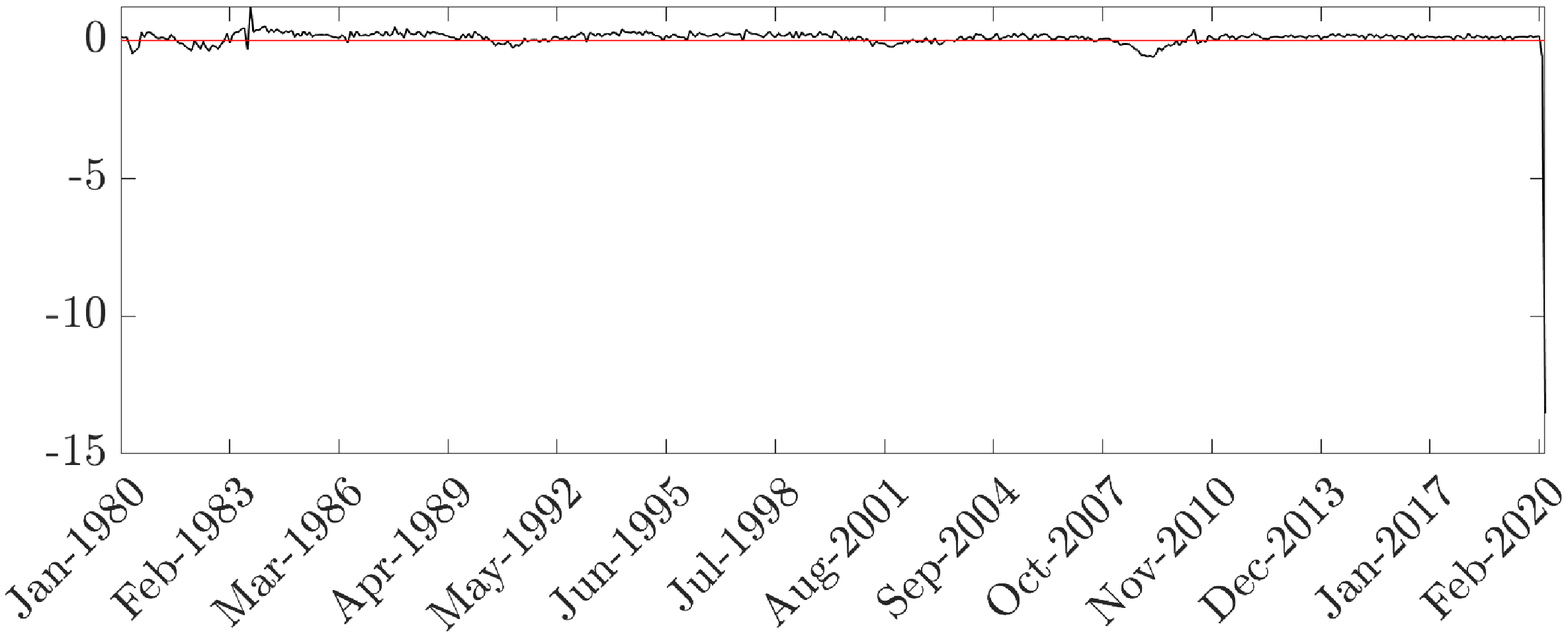} \\
\begin{rotate}{90} \scriptsize \hspace*{60pt} OIL \end{rotate} & \hspace*{-12pt}
\includegraphics[trim= 12mm 0mm 24mm 0mm,clip,height= 4.0cm, width= 10.7cm]{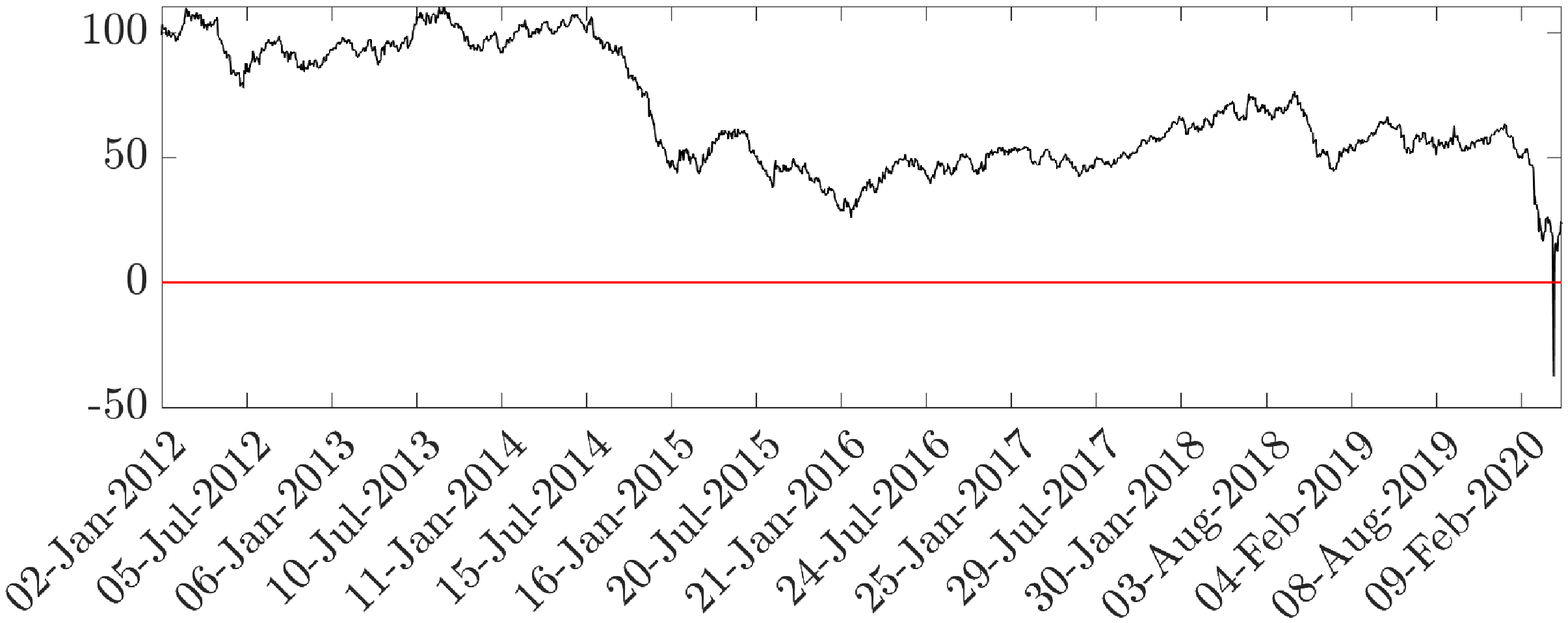} \\
\begin{rotate}{90} \scriptsize \hspace*{60pt} EEX \end{rotate} & \hspace*{-12pt}
\includegraphics[trim= 12mm 0mm 24mm 0mm,clip,height= 4.0cm, width= 10.7cm]{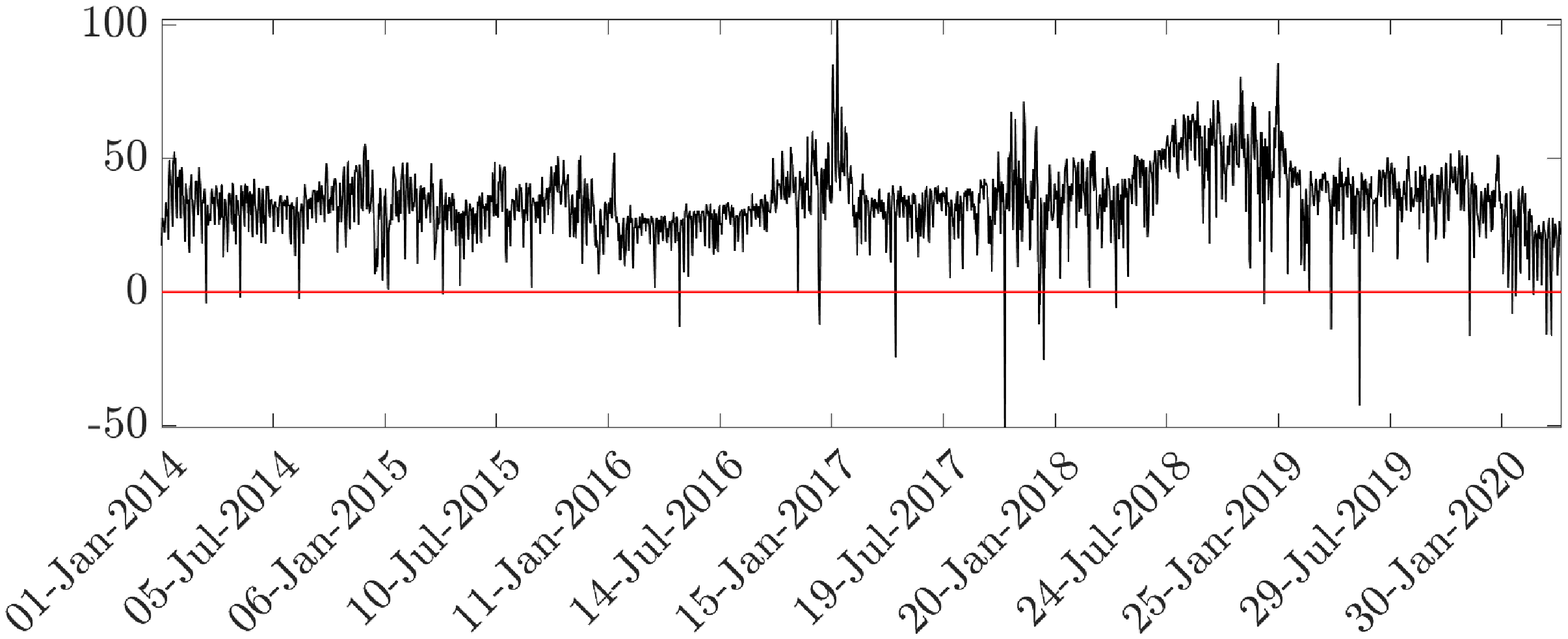}
\end{tabular}
\end{adjustbox}
\label{fig:data_rraw}
\end{figure}

\clearpage

\section{Additional details on empirical applications}   \label{S_sec:apdx_application}

\begin{table}[!th] 
\centering 
\caption{Relative frequency of occurrence of each model for EMPL as best model, across vintages, accoriding to: CRPS (rows 1-2), ACPS with $c=0.05$ (rows 3-4), ACPS with $c=0.95$ (rows 5-6)} 
\begin{adjustbox}{width=1.3\textwidth, center=\textwidth}
\begin{tabular}{|ll|*{13}{c}|} 
\toprule 
&& AR(1)& AR(1)-SV& AR(1)-tSV& AR(12)& AR(12)-SV& AR(12)-tSV& AR(1)-MS& TVP-AR(1)& TVP-AR(1)-SV& TVP-AR(1)-tSV& TVP-AR(2)& TVP-AR(2)-SV& TVP-AR(2)-tSV\\ 
\midrule 
\multirow[]{2}{*}{CRPS} &
$h=1$& 0.01 & 0.10 & 0.10 & 0.03 & 0.16 & 0.13 & 0.00 & 0.06 & 0.07 & 0.07 & 0.10 & 0.09 & 0.06  \\ 
&$h=12$& 0.01 & 0.10 & 0.14 & 0.05 & 0.15 & 0.15 & 0.00 & 0.17 & 0.06 & 0.05 & 0.04 & 0.05 & 0.04  \\ 
\multirow[]{2}{*}{ACPS 0.05} &
$h=1$& 0.00 & 0.20 & 0.05 & 0.00 & 0.31 & 0.06 & 0.01 & 0.10 & 0.04 & 0.03 & 0.11 & 0.05 & 0.03  \\ 
&$h=12$& 0.01 & 0.18 & 0.03 & 0.00 & 0.31 & 0.07 & 0.03 & 0.31 & 0.00 & 0.00 & 0.04 & 0.00 & 0.00  \\ 
\multirow[]{2}{*}{ACPS 0.95} &
$h=1$& 0.00 & 0.16 & 0.03 & 0.00 & 0.26 & 0.08 & 0.00 & 0.15 & 0.07 & 0.03 & 0.16 & 0.03 & 0.03  \\ 
&$h=12$& 0.00 & 0.19 & 0.05 & 0.00 & 0.23 & 0.04 & 0.00 & 0.45 & 0.00 & 0.00 & 0.05 & 0.00 & 0.00  \\ 
\bottomrule 
\end{tabular}
\end{adjustbox}
\end{table}

\begin{table}[!th] 
\centering 
\caption{Relative frequency of occurrence of each model for OIL as best model, across vintages, according to: CRPS (rows 1-2), ACPS with $c=0.05$ (rows 3-4), ACPS with $c=0.95$ (rows 5-6)}
\begin{adjustbox}{width=1.3\textwidth, center=\textwidth}
\begin{tabular}{|ll|*{13}{c}|} 
\toprule 
& & AR(1)& AR(1)-SV& AR(1)-tSV& AR(20)& AR(20)-SV& AR(20)-tSV& AR(1)-MS& TVP-AR(1)& TVP-AR(1)-SV& TVP-AR(1)-tSV& TVP-AR(2)& TVP-AR(2)-SV& TVP-AR(2)-tSV\\ 
\midrule 
\multirow[]{2}{*}{CRPS} &
$h=1$& 0.03 & 0.04 & 0.06 & 0.07 & 0.04 & 0.06 & 0.10 & 0.04 & 0.06 & 0.05 & 0.11 & 0.17 & 0.18  \\ 
&$h=5$& 0.06 & 0.05 & 0.08 & 0.11 & 0.09 & 0.11 & 0.09 & 0.07 & 0.07 & 0.07 & 0.05 & 0.07 & 0.08  \\ 
\multirow[]{2}{*}{ACPS 0.05} &
$h=1$& 0.03 & 0.11 & 0.05 & 0.04 & 0.09 & 0.05 & 0.12 & 0.05 & 0.06 & 0.05 & 0.13 & 0.11 & 0.10  \\ 
&$h=5$& 0.03 & 0.08 & 0.07 & 0.10 & 0.19 & 0.12 & 0.14 & 0.09 & 0.04 & 0.04 & 0.05 & 0.03 & 0.02  \\ 
\multirow[]{2}{*}{ACPS 0.95} &
$h=1$& 0.04 & 0.12 & 0.05 & 0.05 & 0.10 & 0.05 & 0.11 & 0.03 & 0.06 & 0.04 & 0.12 & 0.12 & 0.10  \\ 
&$h=5$& 0.05 & 0.13 & 0.05 & 0.16 & 0.19 & 0.09 & 0.12 & 0.07 & 0.03 & 0.02 & 0.04 & 0.02 & 0.02  \\ 
\bottomrule 
\end{tabular} 
\end{adjustbox}
\end{table}

\begin{table}[!th] 
\centering 
\caption{Relative frequency of occurrence of each model for EEX as best model, across vintages, according to: CRPS (rows 1-2), ACPS with $c=0.05$ (rows 3-4), ACPS with $c=0.95$ (rows 5-6)} 
\begin{adjustbox}{width=1.3\textwidth, center=\textwidth}
\begin{tabular}{|ll|*{13}{c}|} 
\toprule 
&& AR(1)& AR(1)-SV& AR(1)-tSV& AR(7)& AR(7)-SV& AR(7)-tSV& AR(1)-MS& TVP-AR(1)& TVP-AR(1)-SV& TVP-AR(1)-tSV& TVP-AR(2)& TVP-AR(2)-SV& TVP-AR(2)-tSV\\ 
\midrule 
\multirow[]{2}{*}{CRPS} &
$h=1$& 0.04 & 0.03 & 0.05 & 0.11 & 0.16 & 0.10 & 0.04 & 0.05 & 0.06 & 0.11 & 0.08 & 0.07 & 0.10  \\ 
& $h=7$& 0.04 & 0.05 & 0.05 & 0.07 & 0.07 & 0.04 & 0.03 & 0.08 & 0.08 & 0.12 & 0.12 & 0.10 & 0.16  \\ 
\multirow[]{2}{*}{ACPS 0.05} &
$h=1$& 0.03 & 0.03 & 0.02 & 0.18 & 0.18 & 0.04 & 0.03 & 0.09 & 0.08 & 0.04 & 0.13 & 0.11 & 0.04  \\ 
& $h=7$& 0.01 & 0.01 & 0.01 & 0.04 & 0.05 & 0.01 & 0.01 & 0.18 & 0.05 & 0.09 & 0.32 & 0.10 & 0.12  \\ 
\multirow[]{2}{*}{ACPS 0.95} &
$h=1$& 0.05 & 0.05 & 0.03 & 0.16 & 0.19 & 0.04 & 0.05 & 0.08 & 0.07 & 0.02 & 0.12 & 0.10 & 0.03  \\ 
& $h=7$& 0.09 & 0.06 & 0.01 & 0.08 & 0.06 & 0.03 & 0.05 & 0.12 & 0.05 & 0.04 & 0.27 & 0.06 & 0.08  \\ 
\bottomrule 
\end{tabular} 
\end{adjustbox}
\end{table}

\begin{table}[!th] 
\centering 
\caption{Ranking of probability forecasts and accuracy test. Best model, over vintages, according to: CRPS; ACPS with $c=0.05;0.275; 0.5; 0.725; 0.95$ for the three different datasets: Employment (top); Oil (middle) and EEX (bottom).} 
\begin{adjustbox}{width=1.3\textwidth, center=\textwidth}
\begin{threeparttable}
\begin{tabular}{l*{13}{c}} 
\toprule 
& & & &  & & & \multicolumn{2}{c}{\textsc{EMPL}} & \\
\midrule 
\multicolumn{1}{l}{\textit{Horizon 1}} & \cg AR(1)& \cg AR(1)-SV& \cg  AR(1)-tSV& \cg AR(12)& \cg AR(12)-SV& \cg AR(12)-tSV& \cg AR(1)-MS& \cg TVP-AR(1)& \cg TVP-AR(1)-SV& \cg TVP-AR(1)-tSV& \cg TVP-AR(2)& \cg TVP-AR(2)-SV& \cg TVP-AR(2)-tSV \\ 
\AS$(\cdot,\cdot;0.05)$ & 
12 & 9\trestar & 5\trestar & 13 & 2\trestar & 1\trestar & 11\trestar & 3\trestar & 10\trestar & 8\trestar & 4\trestar & 7\trestar & 6\trestar \\
\AS$(\cdot,\cdot;0.275)$ & 12 & 10\trestar & 9\trestar & 13 & 8\trestar & 7\trestar & 11\trestar & 2\trestar & 6\trestar & 5\trestar & 1\trestar & 3\trestar & 4\trestar \\
\AS$(\cdot,\cdot;0.5)$ & 13 & 10\trestar & 9\trestar & 11 & 8\trestar & 7\trestar & 12\trestar & 4\trestar & 6\trestar & 5\trestar & 1\trestar & 2\trestar & 3\trestar \\
\AS$(\cdot,\cdot;0.725)$ & 13 & 10\trestar & 9\trestar & 11\duestar & 2\trestar & 1\trestar & 12\trestar & 8\trestar & 6\trestar & 7\trestar & 4\trestar & 3\trestar & 5\trestar \\
\AS$(\cdot,\cdot;0.95)$   & 13 & 4\trestar & 3\trestar & 12 & 1\trestar & 2\trestar & 11\trestar & 5\trestar & 7\trestar & 9\trestar & 6\trestar & 8\trestar & 10\trestar \\
CRPS & 13 & 10\trestar & 9\trestar & 11 & 8\trestar & 7\trestar & 12\trestar & 4\trestar & 6\trestar & 5\trestar & 1\trestar & 2\trestar & 3\trestar \\
Knuppel p-value & 0.137 & 0.389 & 0.374 & 0.475 & 0.222 & 0.180 & 0.145 & 0.337 & 0.136 & 0.136 & 0.195 & 0.061 & 0.061 \\
\midrule
\multicolumn{1}{l}{\textit{Horizon 12}}  & \cg AR(1)& \cg AR(1)-SV& \cg AR(1)-tSV& \cg AR(12)& \cg AR(12)-SV& \cg AR(12)-tSV& \cg AR(1)-MS& \cg TVP-AR(1)& \cg TVP-AR(1)-SV& \cg TVP-AR(1)-tSV& TVP-AR(2)& \cg TVP-AR(2)-SV& \cg TVP-AR(2)-tSV\\ 
\AS$(\cdot,\cdot;0.05)$ & 11 & 12 & 10 & 13 & 3 & 2 & 9\trestar & 4 & 6 & 5 & 1 & 8 & 7 \\
\AS$(\cdot,\cdot;0.275)$ & 12 & 8\trestar & 5\trestar & 13 & 1\trestar & 2\trestar & 11\trestar & 3\trestar & 7\trestar & 6\trestar & 4\trestar & 9\trestar & 10\trestar \\
\AS$(\cdot,\cdot;0.5)$ & 12 & 4\trestar & 3\trestar & 13 & 1\trestar & 2\trestar & 11\trestar & 5\trestar & 7\trestar & 8\trestar & 6\trestar & 9\trestar & 10\trestar \\
\AS$(\cdot,\cdot;0.725)$ & 12 & 4\trestar & 1\trestar & 13 & 2\trestar & 3\trestar & 11\trestar & 5\trestar & 6\trestar & 7\trestar & 8\trestar & 9\trestar & 10\trestar \\
\AS$(\cdot,\cdot;0.95)$   & 12 & 1\trestar & 2\trestar & 13 & 3\trestar & 4\trestar & 11\trestar & 5\trestar & 7\trestar & 8\trestar & 6\trestar & 9\trestar & 10\trestar \\
CRPS & 12 & 4\trestar & 3\trestar & 13 & 1\trestar & 2\trestar & 11\trestar & 5\trestar & 7\trestar & 8\trestar & 6\trestar & 9\trestar & 10\trestar \\
Knuppel p-value & 0.181 & 0.088 & 0.126 & 0.440 & 0.124 & 0.138 & 0.170 & 0.087 & 0.354 & 0.354 & 0.049 & 0.287 & 0.287 \\
\bottomrule 
\toprule 
& & & &  & & & \multicolumn{2}{c}{\textsc{OIL}} & \\
\midrule 
\multicolumn{1}{l}{\textit{Horizon 1}}  & AR(1)& AR(1)-SV& AR(1)-tSV& AR(20)& AR(20)-SV& AR(20)-tSV& AR(1)-MS& TVP-AR(1)& TVP-AR(1)-SV& TVP-AR(1)-tSV& TVP-AR(2)& \cg TVP-AR(2)-SV& \cg TVP-AR(2)-tSV \\  
\AS$(\cdot,\cdot;0.05)$ & 12 & 8\unastar & 11 & 10\duestar & 7\unastar & 9\duestar & 13 & 6\trestar & 3\trestar & 4\trestar & 5\trestar & 2\trestar & 1\trestar \\
\AS$(\cdot,\cdot;0.275)$ & 12 & 8 & 13 & 7\unastar & 9 & 10 & 11 & 4\trestar & 5\trestar & 6\trestar & 1\trestar & 3\trestar & 2\trestar \\
\AS$(\cdot,\cdot;0.5)$ & 12 & 9 & 13 & 7 & 8 & 11 & 10 & 3\trestar & 6\trestar & 5\trestar & 1\trestar & 4\trestar & 2\trestar \\
\AS$(\cdot,\cdot;0.725)$ & 10 & 9 & 13 & 7 & 8 & 11 & 12 & 2\trestar & 6\duestar & 5\duestar & 1\trestar & 4\trestar & 3\trestar \\
\AS$(\cdot,\cdot;0.95)$   & 12 & 8 & 11\duestar & 9 & 7 & 10\duestar & 13 & 3\duestar & 6\unastar & 5\unastar & 1\duestar & 4\unastar & 2\unastar \\
CRPS & 12 & 9 & 13 & 7 & 8 & 11 & 10 & 3\trestar & 6\trestar & 5\trestar & 1\trestar & 4\trestar & 2\trestar \\
Knuppel p-value & 0.000 & 0.000 & 0.000 & 0.000 & 0.000 & 0.000 & 0.000 & 0.011 & 0.042 & 0.042 & 0.001 & 0.159 & 0.159 \\
\midrule
\multicolumn{1}{l}{\textit{Horizon 5}} & AR(1)& AR(1)-SV& AR(1)-tSV& AR(20)& AR(20)-SV& AR(20)-tSV& AR(1)-MS& TVP-AR(1)& TVP-AR(1)-SV& TVP-AR(1)-tSV& TVP-AR(2)& TVP-AR(2)-SV& TVP-AR(2)-tSV\\  
\AS$(\cdot,\cdot;0.05)$ & 10 & 5 & 7 & 9 & 3 & 6 & 12 & 13 & 2 & 8 & 11 & 1\unastar & 4\unastar \\
\AS$(\cdot,\cdot;0.275)$ & 5 & 8 & 7 & 6 & 4 & 3 & 11 & 1\duestar & 13 & 12 & 2\unastar & 10 & 9 \\
\AS$(\cdot,\cdot;0.5)$ & 3 & 7 & 6 & 8 & 4 & 5 & 11 & 1\duestar & 13 & 12 & 2 & 10 & 9 \\
\AS$(\cdot,\cdot;0.725)$ & 2 & 4 & 5 & 8 & 3 & 6 & 9 & 1 & 13 & 11 & 7 & 12 & 10 \\
\AS$(\cdot,\cdot;0.95)$   & 6 & 1 & 3\unastar  & 8 & 2 & 4 & 13 & 5 & 10 & 7 & 11 & 12 & 9 \\
CRPS & 3 & 6 & 7 & 8 & 4 & 5 & 11 & 1\duestar & 13 & 12 & 2 & 10 & 9 \\
Knuppel p-value & 0.025 & 0.021 & 0.012 & 0.040 & 0.016 & 0.009 & 0.004 & 0.002 & 0.017 & 0.017 & 0.001 & 0.002 & 0.002 \\
\bottomrule 
\toprule 
& & & &  & & & \multicolumn{2}{c}{\textsc{EEX}} & \\
\midrule 
\multicolumn{1}{l}{\textit{Horizon 1}}  & AR(1)& AR(1)-SV& AR(1)-tSV& AR(20)& AR(20)-SV& AR(20)-tSV& AR(1)-MS& TVP-AR(1)& TVP-AR(1)-SV& TVP-AR(1)-tSV& TVP-AR(2)& TVP-AR(2)-SV& TVP-AR(2)-tSV\\ 
\AS$(\cdot,\cdot;0.05)$ &11 & 8\unastar & 6\trestar & 9\trestar & 3\trestar & 1\trestar & 12 & 13 & 7\duestar & 4\duestar & 10 & 5\duestar & 2\trestar \\
\AS$(\cdot,\cdot;0.275)$ & 12 & 10 & 11 & 5\trestar & 2\trestar & 1\trestar & 13 & 9\trestar & 7\trestar & 6\trestar & 8\trestar & 4\trestar & 3\trestar \\
\AS$(\cdot,\cdot;0.5)$ & 12 & 10\trestar & 11\unastar & 7\trestar & 2\trestar & 5\trestar & 13 & 9\trestar & 6\trestar & 4\trestar & 8\trestar & 3\trestar & 1\trestar \\
\AS$(\cdot,\cdot;0.725)$ & 12 & 10\trestar & 11\trestar & 9\trestar & 5\trestar & 6\trestar & 13 & 8\trestar & 3\trestar & 2\trestar & 7\trestar & 1\trestar & 4\trestar \\
\AS$(\cdot,\cdot;0.95)$   & 12 & 7\trestar & 8\trestar & 11\trestar & 2\trestar & 6\trestar & 13 & 9\trestar & 3\trestar & 5\trestar & 10\trestar & 1\trestar & 4\trestar \\
CRPS & 13 & 10\trestar & 11\unastar & 7\trestar & 1\trestar & 5\trestar & 12 & 9\trestar & 6\trestar & 4\trestar &  8\trestar & 3\trestar & 2\trestar \\
Knuppel p-value & 0.000 & 0.000 & 0.000 & 0.000 & 0.000 & 0.000 & 0.000 & 0.000 & 0.000 & 0.000 & 0.000 & 0.000 & 0.000 \\
\midrule
\multicolumn{1}{l}{\textit{Horizon 7}}  & AR(1)& AR(1)-SV& AR(1)-tSV& AR(20)& AR(20)-SV& AR(20)-tSV& AR(1)-MS& TVP-AR(1)& TVP-AR(1)-SV& TVP-AR(1)-tSV& TVP-AR(2)& TVP-AR(2)-SV& TVP-AR(2)-tSV\\  
\AS$(\cdot,\cdot;0.05)$ & 5 & 13 & 12 & 1\trestar & 11 & 4 & 7 & 3 & 9 & 8 & 2 & 10 & 6 \\
\AS$(\cdot,\cdot;0.275)$ & 10 & 12 & 13 & 8\trestar & 9\trestar & 7\trestar & 11 & 6\trestar & 5\trestar & 2\trestar & 3\trestar & 4\trestar & 1\trestar \\
\AS$(\cdot,\cdot;0.5)$ & 10 & 12 & 13 & 9\trestar & 8\trestar & 7\trestar & 11 & 6\trestar & 4\trestar & 2\trestar & 5\trestar & 3\trestar & 1\trestar \\
\AS$(\cdot,\cdot;0.725)$ & 12 & 11\duestar & 10\trestar & 9\trestar & 8\trestar & 7\trestar & 13 & 5\trestar & 6\trestar & 2\trestar & 3\trestar & 4\trestar & 1\trestar \\
\AS$(\cdot,\cdot;0.95)$   & 12 & 11 & 10 & 9\trestar & 7\unastar & 5\duestar & 13 & 1\trestar & 8\unastar & 4\duestar & 2\trestar & 6\unastar & 3\duestar \\
CRPS & 10 & 12 & 13 & 9\trestar & 8\trestar & 7\trestar & 11 & 6\trestar & 4\trestar & 2\trestar & 5\trestar & 3\trestar & 1\trestar \\
Knuppel p-value & 0.000 & 0.000 & 0.000 & 0.000 & 0.000 & 0.000 & 0.000 & 0.000 & 0.000 & 0.000 & 0.000 & 0.000 & 0.000 \\
\bottomrule 
\end{tabular}
\small{
\begin{tablenotes}
\item \textit{Notes:}
\item[1] $^{\ast \ast \ast}$, $^{\ast \ast}$ and $^{\ast}$ indicate scores are significantly different from 1 at $1\%$, $5\%$ and $10\%$, according to the ACPStest in Section~\ref{sec:test}.
\item[2] Gray cells indicate models that are correctly
calibrated at $5\%$ significance level according to the Knuppel test. 
\end{tablenotes}
}
\end{threeparttable}
\end{adjustbox}
\label{tab:Real_CRPS_ACPS_Supp}
\end{table}

\end{document}